\documentclass[5p,times,sort&compress]{elsarticle}

\usepackage[cmex10]{amsmath}
\usepackage{amssymb,amsthm}
\usepackage{algorithmic,algorithm}
\usepackage{mdwmath,mdwtab}
\usepackage{eqparbox}
\usepackage[bold,full]{complexity}
\usepackage[font={footnotesize}]{caption}
\usepackage[font={footnotesize}]{subcaption}
\usepackage{psfrag}
\usepackage{textcomp}
\usepackage{array}
\usepackage{color}
\usepackage{pdfpages}
\usepackage{graphicx}
\bibliographystyle{elsarticle-num}
\newtheorem{theorem}{Theorem}
\usepackage{microtype}

\begin{document}

\begin{frontmatter}
\title{Stochastic Buffer-Aided Relay-Assisted MEC\\ in Time-Slotted Systems}
\author{Javad Hajipour}
\address{Department of Computer Science, University of Tabriz, Tabriz 5166616471, Iran.\\hajipour@tabrizu.ac.ir}
\begin{abstract}
Mobile Edge Computing (MEC) has attracted significant research efforts in the recent years. However, these works consider mostly the computation resources located at the cloud centers and wireless access nodes, ignoring the possibility of utilizing server-empowered relays to improve the performance. In this paper, we study stochastic relay-assisted MEC in systems with discrete transmission time-line and block fading wireless channels. In order to clearly identify and inspect the fundamental affecting factors, we investigate the building block of this architecture, namely a hierarchical network consisting of a source, a buffer-and-server-aided relay and another higher-level computing node. We provide a framework to take into account the effects of the fading channels, the task arrival dynamics as well as the queuing delays in both the transmission and computation buffers, which facilitates the derivation of the expression for the Average Response Time (ART). Based on that and the system average power consumption in each slot, we introduce the concept of Average Response Energy (ARE) as a novel metric to capture the energy efficiency in MEC while considering the stochastic nature of the system parameters. Accordingly, we propose two offloading schemes with their respective problem formulations, namely the Minimum ART (MART) and the Minimum ARE (MARE) schemes, to optimize the transmission power and task assignment probability while keeping the system queues stable. We demonstrate the difference of the formulated problems with the relevant problem in a recent work, analyze the properties of the problems and noting them, we propose effective solution methods. Using extensive simulations, we validate the presented analysis and show the effectiveness of the proposed schemes in comparison with various baseline methods adapting existing approaches.\looseness=-1
\end{abstract}
\begin{keyword}
Mobile Edge Computing, Buffer-and-Server-Aided Relay, Task Offloading, Convex Optimization.
\end{keyword}
\end{frontmatter}

\vspace*{-.2cm}
\section{Introduction}\label{sec:intro}
The advent of the Internet of Things (IoT) and the massive device connections in wireless networks have resulted in the proliferation of new applications utilized by these devices such as augmented reality, face recognition, and health-related applications~\cite{JR:class_opt_prob,JR:class_fog_app}. Consequently, the network providers have encountered new challenges to address the task computation requests from these devices in a timely manner. In this regard, Mobile Edge Computing (MEC) has emerged as a new approach to reduce the Average Response Time (ART) for the requested tasks by the IoT Devices (IoTDs) and other resource-constrained mobile stations~\cite{JR:het_edge_op_plat,JR:review_edge_ref,JR:fog_stat_art,JR:edge_feat_virtual}. In MEC, different components of the wireless networks are empowered with computing resources, which enable them to process the requested tasks by their own servers. This alleviates the need to send the tasks to the servers of the clouds located in remote centers, and reduces the delay for the applications. Therefore, there has been remarkable research and investigations on MEC, in the recent years. These works will be reviewed briefly in the next subsection and after that, we will present the motivation and contributions of this paper.

\subsection{Background and Literature Review}\label{subsec:rel_works}
Many studies have investigated the potentials and challenges in MEC systems while considering possible scenarios and the related constraints~\cite{JR:surv_offl_model_mec}. In particular, hierarchical computation architectures were studied in~\cite{JR:hier_auction,JR:hier_cloudlet}, where a three-level hierarchy of computation facilities with two time scale model of frames and slots within each frame was considered in~\cite{JR:hier_auction}. The authors proposed an auction-based policy for allocating the computation and communication resources in a way to maximize the service provider's profit in each frame and minimize the delay of the users in each slot. In~\cite{JR:hier_cloudlet}, the ART of the users' requests was minimized by optimizing the cloudlet assignment and computation resource allocation.~\cite{JR:joint_lb_offl_veh} considered MEC in vehicular networks and maximized system utility for joint load balancing and offloading.~\cite{JR:dynamic_congestion} aimed at minimizing the average completion time of applications in a multi-user scenario, while considering the dependencies between the tasks, the limitations on communication and computation resources as well as the congestion level on these resources. The authors first formulated a static problem and then adapted it to dynamic systems, and proposed an effective method to make decision for task offloading.~\cite{JR:stoch_resal_pw_mec} studied task offloading in a wireless powered MEC system with a single user that receives wireless energy and computation assistance from a server-empowered access point. The authors formulated a stochastic optimization problem to maximize the average task execution rate and, using the Lyapunov optimization technique, they proposed an online algorithm for resource management which takes into account the length of task queue, the battery level and the channel states in each time slot. In~\cite{JR:reg_intel}, a regional intelligent system for vehicular networks was presented using MEC. The authors considered different types of tasks and aimed at optimizing the resource allocation to minimize the delay. They formulated the problem as a Markov decision process and used deep Q-learning to address that.
~\cite{JR:het_elastic} studied  an edge computing system that depending on the workload of the tasks, requests to tenant or release computing instances from the cloud resources and accordingly, decides about data migration and replica placement.~\cite{JR:opt_fair_aw} studied a multi-user multi-server MEC system, where each user's task can be partitioned into several segments to be computed locally or assigned to edge servers, and a fairness-aware method was proposed to minimize the maximum delay for task computations in the system.
~\cite{JR:del_opt_IRS,JR:IRS_d2d_coop} studied MEC systems in the presence of intelligent reflecting surface (IRS) which is able to improve the signal propagation environment by configuring its reflecting elements in a way to enhance the signal reception quality at the receiver.~\cite{JR:del_opt_IRS} presented a combination of time-division multiple access (TDMA) and non-orthogonal multiple access (NOMA) as the data transmission scheme for the task offloading by the two users and minimized the sum delay of the users.~\cite{JR:IRS_d2d_coop} studied an IRS-assisted device-to-device (D2D) cooperative computing system where several nodes help a source node in computing the portions of a task. The authors aimed at designing the computation task assignment, transmit power and bandwidth allocation, and phase beamforming of the IRS to minimize the computing delay. In~\cite{JR:svc_plc_req_rout}, a cloud radio access network with multiple edge servers and a centralized cloud was investigated, where decision for service placement on the edge servers and cloud are made over longer timescale whereas user request routing to the servers and cloud is conducted within a shorter timescale. Efficient algorithms were proposed for joint optimization of service placement and request routing in order to obtain maximum system utility under the constraints related to the storage and computation resources.~\cite{JR:GEESE,JR:trajec_uav} studied edge computation provisioning through unmanned autonomous vehicles (UAVs) and investigated the advantages as well as the challenges of this approach. Many other works have discussed exploiting artificial intelligence and machine learning techniques to enhance the decision making and resource management in MEC systems~\cite{JR:task_mig_reinforc19,JR:intel_ml_smart_city21,JR:dron_learn_video21,JR:autonomous_deep_learn21}.

Energy efficiency is also one of the important challenges in Information and Communication Technology (ICT) management and hence, many works have investigated that in MEC systems.~\cite{JR:ee_mec_offl17} investigated energy-efficient offloading scheme in TDMA and orthogonal frequency division multiple access (OFDMA) MEC systems in the presence of the users with the same delay constraint. The authors investigated the cases with infinite and finite computing capacity at the edge and proposed low-complexity algorithms to minimize the weighted sum energy consumption of the users. In~\cite{JR:MEC_assoc_multi_task18}, a multi-task scenario with sequential dependencies of the tasks was taken into account and an algorithm was designed for the user association and computation offloading that minimizes the energy consumption.~\cite{JR:eng_aw_green_pow19} considered a MEC system enhanced by the energy Internet (EI), where the edge nodes have the ability to transfer the green energy among themselves, and formulated a problem to minimize the brown energy consumption. Taking the task loads and energy transfer attenuation into account, they proposed a heuristic algorithm with near-optimal performance for deciding about the computation resource migration as well as task allocation and energy scheduling.~\cite{JR:joint_offl_ee19} studied energy-efficient computation offloading in a time-slotted multi-user system by defining the computation efficiency as the number of the computed bits divided by the corresponding energy consumption. The authors aimed at maximizing the computation efficiency while considering the constraints on the total energy of the devices, and proposed an iterative algorithm to obtain the optimal values for the processor speed and transmit power of the nodes as well as the slot portion allocation to different nodes.~\cite{JR:scalable_e_d19} studied energy and delay trade-off in a multi-user MEC system with a multi-antenna full-duplex (FD) base station (BS) that sends information in the downlink and receives computation task requests through the uplink transmissions of the users. The authors proposed a beamforming and power allocation scheme to minimize the weighted sum of the offloading energy and latency while considering the constraints on the maximum delay, maximum power and minimum signal to interference-plus-noise ratio (SINR). In~\cite{JR:en_eff_user_asgn_pwalloc20}, energy-efficient task offloading was investigated in a MEC system with multiple users and multiple edge servers. The authors formulated an optimization problem and proposed a solution method to maximize the ratio of total transmit rate of all users to the total power consumption by jointly optimizing user-edge server assignments and user transmit powers.~\cite{JR:opt_en_wirl_pow} investigated a single-user system with multi-antenna wireless energy transmitter (ET) and minimized energy consumption of ET over a finite time horizon by jointly optimizing the energy allocation and task offloading.~\cite{JR:eng_aw_fog_cluster21} investigated an edge computing system with energy harvesting and exchanging capability at the nodes. They used a clustering scheme based on the energy transfer attenuation ratio to group the nodes into clusters where the nodes inside the cluster send their computation requests to the computing node of that cluster. They also proposed a heuristic method to decide about task schedule and energy transfer within/between the clusters.\looseness=-1

\subsection{Motivation and Contributions}\label{subsec:mot_contr}
The above-mentioned works mostly considered the hierarchy of the computing nodes as an architecture consisting of the servers at the wireless access nodes and the severs at the cloud computing centers. Even though they investigated many sophisticated scenarios, they did not take into account the various possibilities that can be brought by the relay nodes. But we note that, in the past decade, many works demonstrated the advantages of using relays, especially the buffer-aided relays, for improving the performance of the wireless networks~\cite{JR:bufaid_adaptive,JR:buf_fixedmixed,JR:BufImp_without_link,JR:EERA_barel,JR:CRS}. In particular,~\cite{JR:bufaid_adaptive} studied adaptive link selection in a three-node network comprising a source, a buffer-aided relay and a destination, and demonstrated significant throughput gains due to the use of buffer at the relay. A similar system was considered in~\cite{JR:buf_fixedmixed} with fixed-rate and mixed-rate transmission scenarios, where the transmission rate at the source and relay are fixed or the relay adapts its transmission rate according to the channel state information, when available. 
For both cases, improvement in the throughput was observed. In~\cite{JR:BufImp_without_link}, it was shown that buffer-aided relaying improves not only the throughput, but also the end-to-end delay. In fact, when considering the whole picture and taking into account the queuing delay at the source, it is revealed that the increase in the system throughput leads to lower latency since the data arrival at the source until the reception at the destination. The advantages of buffer-aided relays were also demonstrated in the multi-user and multi-relay scenarios, e.g.,~\cite{JR:EERA_barel} investigated energy-efficiency gain in multi-user scenarios and~\cite{JR:CRS} studied outage improvement in relay selection schemes in multi-relay networks.\looseness=-1

Noting the aforementioned, there is a major research gap on using the relay nodes to enhance the MEC systems. Recently, in~\cite{JR:sbaramec}, a Multi-hop MEC (MMEC) architecture was proposed where it was shown that employing a computation capability at a buffer-aided relay and using a stochastic offloading scheme can improve the ART, significantly. But the results were derived based on the assumption that fixed transmission rates are used on the links; therefore, it was only applicable for the scenarios that the wireless channels do not experience deep fading and the transmitting nodes are able to adapt their transmit power according to the variations of the channel conditions. However, in practice, there are many scenarios that the wireless channels experience deep fading which does not permit successful transmission at some periods. Moreover,~\cite{JR:sbaramec} assumed continuous time domain where the nodes can start transmission/computation of the tasks/results whenever there is one in the corresponding transmission/computation buffers. However, many systems have discrete transmission time-line and the nodes are only allowed to transmit at the beginning of system-level-defined time intervals. Thus, more investigations are required to analyze such systems, which are the subject of the current paper.\looseness=-1 

Specifically, in contrast to the existing research works, we study stochastic buffer-aided relay-assisted MEC in the time-slotted systems with block fading channels, for the first time to the author's best knowledge. As for many new architectures and protocols, it is required to study the building block in the early stages of developing the relay-assisted MEC, to get clear insights about the effect of the fundamental parameters and the possible outcomes. Moreover, it is important to provide a framework which will give directions for investigating the next developments and the more sophisticated scenarios with different constraints. Hence, similar to~\cite{JR:bufaid_adaptive,JR:buf_fixedmixed,JR:BufImp_without_link,JR:sbaramec}, we consider a three-node network as the building block of hierarchical time-slotted MMEC systems. It is composed of a source, a buffer-aided server-enabled relay, and another server-empowered node with a level higher than the level of the relay in the hierarchy, where the relay makes a random decision to assign a received task to its own server or to the server of the next node. However, the presented framework and discussions can be extended to MMEC systems with more than two hops and/or more than one source and/or relay. We believe that the combination of MEC and buffer-aided relaying will play remarkable role in provisioning ubiquitous access to computation resources, in the emerging wireless networks densely populated with IoTDs. In particular, buffer-and-server-aided relay-assisted MEC possesses promising potentials for both ad hoc and infrastructure-based wireless networks to exploit the ever-growing computation capabilities of the user equipments and/or relay nodes scattered in these networks and to pave the way for high performance applications. 
Therefore, we are motivated to initiate the investigations in this area and build a solid basis for further research.
In summary, the main contributions of this work are as follows:
\begin{itemize}
\item We provide the system model and establish a framework to capture the factors affecting the ART in a time-slotted system with block fading channels and fixed transmit power on the links. In particular, we discuss the queuing model of the transmission buffers and derive their service probabilities.
\item We analyze the ART in the system and compare its form with the counterpart in the continuous-time fixed-transmit-rate systems. Based on that, we propose the Minimum ART (MART) scheme with the problem formulation to find the optimal values for the transmit power on the links and the probability of task assignment to the relay, such that the lowest ART is achieved while keeping the system queues stable. We discuss the properties of the formulated problem and prove that the optimal value of the transmit power is equal to its maximum value. Then, we prove that the problem is convex with respect to the task assignment probability and therefore, it can be solved using a low-complexity one-dimensional search method.
\item We demonstrate that at high values of the transmit power, increasing the power does not have much effect on the ART. We derive the Slot Average Power (SAP) as a function of the transmit power on the links and the task assignment probability. Based on that, we introduce a novel notion of energy efficiency as the product of the SAP and the ART in the system, yielding the concept of Average Response Energy (ARE). Using that, we propose the Minimum ARE (MARE) scheme and formulate a new optimization problem to obtain a low ART and at the same time use the system power efficiently. We explain the properties of the new objective function and provide directions to reduce the complexity of the solution strategies. Moreover, in the special case of Rayleigh fading, we demonstrate that the problem is convex with respect to the transmit power and based on that, we propose effective algorithms to find the optimal values of the transmit power and task assignment probability.
\item We conduct extensive simulations to verify the presented analysis and also demonstrate the effectiveness of the proposed schemes in comparison with the related baseline methods. Moreover, we evaluate the performance of the proposed schemes in different system settings and provide valuable insights on the possible trends of the outcomes.
\end{itemize}

It is worth emphasizing that the proposed schemes differ from the previous works as we consider different system model employing a buffer-and-server-aided wireless relay node between the source node and a higher-level computing node in a time-slotted MEC system. By taking into account the dynamics of task arrivals, channel variations over time slots, stochastic offloading decisions, and their effects on the transmission and computation queues, the equation derived for the ART in the time-slotted systems is different from that in the existing works. Moreover, the ARE is a new objective function that we define in order to design an energy-efficient resource allocation method while taking into consideration the stochastic nature of the aforesaid parameters. Consequently, the formulated problems are novel and require new analysis. Furthermore, the essential insights provided by the presented analysis and the numerical results can be exploited in the future works for designing sophisticated solutions for more complex and multi-hop systems. In fact, this work is a starting point to exploit the wireless relays for expanding the hierarchy of MEC architecture in time-slotted systems; therefore, it has the potential to inaugurate new research trends and investigations on the architectures that incorporate the buffer-and-server-aided relays into the system models studied previously in the recent years. 

The remainder of this paper is organized as follows. In Section~\ref{sec:sysmodel}, the system model is described. Section~\ref{sec:MART_MARE}, presents the proposed offloading schemes and the corresponding problem formulations. We analyze their properties and present the proposed solution methods. In Section~\ref{sec:results}, numerical results are provided and Section~\ref{sec:conclusion} presents the conclusion.

\vspace*{-.2cm}
\section{System Model}\label{sec:sysmodel}
We consider a three-node network, as shown in Fig.~\ref{fig:sysmodel}, comprising a Source Node (SN), a Relay Node (RN), and another node called the Higher-level Node (HN). The SN may represent an IoTD or a wireless mobile station that does not have a computing server. The HN and the RN may be respectively the BS and an installed wireless relay in a cellular network, or they may represent any stations in an ad hoc network that have computation capability and cooperate to process the tasks demanded by the SN. It is assumed that time is partitioned into the units of slot; in every slot, an application in the SN generates a task with the arrival probability $a$ and independent from the other tasks. The SN stores the tasks in a transmission buffer to send to the RN whenever possible. On the other hand, the RN exploits a stochastic scheme to decide about processing or offloading the received tasks. In particular, when a task is received by the RN, it is assigned to the server of the RN with probability $\rho$, and is offloaded to the HN with probability $1-\rho$. The results of the tasks processed at the HN are transmitted back to the RN and together with the results of the tasks processed at the RN, they all are sent to the SN gradually.

\begin{figure}[t]
\centering
\psfrag{u}[scale=.18]{$q^t_{sr}$}
\psfrag{v}[scale=.18]{$q^t_{rs}$}
\psfrag{w}[scale=.18]{$q^c_r$}
\psfrag{x}[scale=.18]{$q^t_{rh}$}
\psfrag{y}[scale=.18]{$q^t_{hr}$}
\psfrag{z}[scale=.18]{$q^c_h$}
\includegraphics[scale=.55]{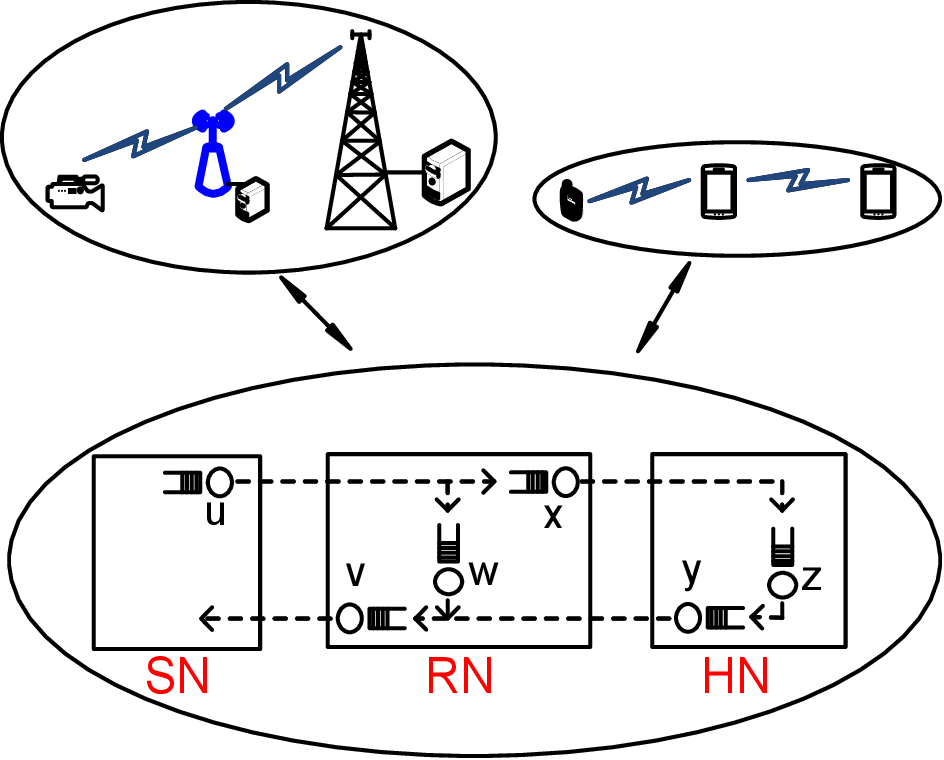}
\caption{System model.}
\vspace*{-.3cm}
\label{fig:sysmodel}
\end{figure}

There are separate dedicated orthogonal channels for transmissions on the SN-RN, RN-HN, HN-RN, and RN-SN
links, and a direct SN-HN link does not exist. The RN has a computation buffer for storing the tasks to be computed by its own server. Moreover, it has a transmission buffer to store the tasks to be forwarded to the HN and a transmission buffer to store the results to be sent to the SN. Similarly, the HN is equipped with two buffers, one for storing the tasks to be computed and another one for storing the results to be sent to the RN. At the beginning of each time slot, if there is a task/result in a transmission (computation) buffer and its serving channel (computation server) is available, the task/result is removed from the buffer and transmitted (computed) completely by the end of the slot. We assume that the propagation delay between the nodes is negligible. The response time for a task is defined as the time duration since the the task is generated until the result is received at the SN. Hence, this time duration consists of the delays at the computation and transmission queues which in turn are affected by the service probabilities of the queues. Based on the discussions above, the arrival and service processes of the buffers are Bernoulli processes and thus, the number of time slots between the arrivals and between the services of each buffer has geometric distribution. For such queuing systems with arrival probability $x$ and service probability $y$, the queue is stable if $x<y$~\cite{book:stochasticopt}; in that case, the probability of a departure in a slot is also $x$ and the average waiting time (in units of slot) is~\cite{book:intro_que_telecom}:
\begin{equation}\label{eq:Wxy}
W(x,y)=\frac{1-x}{y-x},
\end{equation}
which includes the time spent in the queue and in the server of the queue.
We note that the computation servers of the RN and the HN may be busy due to processing their own tasks and hence, they may not be available in every time slot to process a task of the SN. The probability that in a time slot, the server of the RN (HN) can allocate its computation resources to process a task of the SN is denoted by $q^c_r$ ($q^c_h$). On the other hand, sending the tasks/results from a transmission buffer is only possible in the slots that the signal-to-noise-ratio (SNR) at the receiver side of the serving link is higher than a threshold, $\gamma$. 
We assume that the channel conditions are ergodic and stationary processes; they stay fixed over a time slot but change from one slot to the next. Let $g_{sr}$, $g_{rs}$, $g_{rh}$, and $g_{hr}$ denote the squared channel gains, respectively, on the SN-RN, RN-SN, RN-HN, and HN-RN links in a general time slot. We use $\mathcal{F}_{g_{sr}}(g)$ to denote the Cumulative Distribution Function (CDF) of $g_{sr}$ which indicates the probability of having $g_{sr}\leq g$. $\mathcal{F}_{g_{rs}}(g)$, $\mathcal{F}_{g_{rh}}(g)$, and $\mathcal{F}_{g_{hr}}(g)$ are defined in a similar way. As an early work to combine MEC and buffer-aided relaying, we assume that all the nodes use identical and constant transmit power $P$. The reason for considering identical transmit power at all the nodes is that it allows to have similar transmission capability for all the nodes and focus only on the impact of the single power variable and the task assignment probability; however, the problem formulations can be extended in future works to include the cases with different transmit powers at the nodes as well as other constraints.

Let $N_0$ denote the noise power at the receiver side of the links. The SNR on the SN-RN link is equal to $\frac{Pg_{sr}}{N_0}$; based on that and the aforementioned, the probability of having the SNR on the SN-RN link higher than $\gamma$, which indicates the probability of channel availability for successful task transmissions from the SN-RN transmission buffer, equals 
\begin{equation}\label{eq:qt_sr}
q^t_{sr}(P)=1-\mathcal{F}_{g_{sr}}(\frac{\gamma}{P/N_0}).
\end{equation}
$q^t_{rs}(P)$, $q^t_{rh}(P)$, and $q^t_{hr}(P)$ can be defined in a similar way. In the rest of the paper, we may refer to $q^t_{sr}(P)$, $q^t_{rs}(P)$, $q^t_{rh}(P)$, and $q^t_{hr}(P)$ as the channel service probabilities and, for brevity, we may omit the argument $P$ in some equations. We assume that channel service probabilities are increasing functions of $P$, because as the transmit power increases, the probability of satisfying the threshold SNR, $\gamma$, at the receiver is increased. 

Moreover, to be concrete, we will specialize the derived results for the case of Rayleigh fading channels. Noting the characteristics of Rayleigh distribution, we have $\mathcal{F}_{g_{sr}}(g)=1-e^{-\frac{g}{\overline{g}_{sr}}}$, where $\overline{g}_{sr}$ is the mean value of $g_{sr}$. Thus, using~\eqref{eq:qt_sr} yields $q^t_{sr}=e^{-\frac{\Gamma_{sr}}{P}}$, where $\Gamma_{sr}=\frac{\gamma}{\overline{g}_{sr}/N_0}$. Similarly, for the RN-SN, RN-HN, and HN-RN links respectively, we have $q^t_{rs}=e^{-\frac{\Gamma_{rs}}{P}}$, $q^t_{rh}=e^{-\frac{\Gamma_{rh}}{P}}$, and $q^t_{hr}=e^{-\frac{\Gamma_{hr}}{P}}$,
where $\Gamma_{rs}$, $\Gamma_{rh}$, and $\Gamma_{hr}$ are defined similar to $\Gamma_{sr}$.

\section{Problem Formulations and Solutions; The MART and MARE Schemes}\label{sec:MART_MARE}
Based on the discussions in the previous section, in order to ensure queue stability, the service probability should be higher than the arrival probability for each queue. 
In that case, since the arrived tasks at the RN are randomly kept with probability $\rho$ or offloaded to the HN with probability $1-\rho$, the task arrival probability for the computation buffer at the RN will be $\rho a$. For the RN-HN and HN-RN transmission queues as well as the HN computation queue, the task/result arrival probability will be $(1-\rho)a$. Since the results of the tasks processed at the RN and HN enter the RN-SN transmission queue, the arrival probability of the results at the RN-SN transmission queue is $a$.\looseness=-1

For the tasks processed at the RN, the ART consists of the delays in the SN-RN transmission queue, the RN computation queue, and the RN-SN transmission queue. Hence, the ART of these tasks is obtained as 
\begin{equation}\label{eq:Tr}
T_r(P,\rho) = W(a,q^t_{sr})+W(\rho a,q^c_r)+W(a,q^t_{rs})
\end{equation}
Similarly, the tasks processed at the HN have the ART 
\begin{align}\label{eq:Th}
T_h(P,\rho)=&W(a,q^t_{sr})+W((1-\rho)a,q^t_{rh})+W((1-\rho)a,q^c_h)&\nonumber\\
+&W((1-\rho)a,q^t_{hr})+W(a,q^t_{rs}).&
\vspace*{-.2cm}
\end{align}
Hence, the ART of the tasks in the system is 
\begin{equation}\label{eq:T}
T(P,\rho) = \rho T_r(P,\rho) + (1-\rho)T_h(P,\rho).
\end{equation}

Based on the aforesaid, we present two problem statements in the following with different objective functions; then, we discuss their properties and propose solution methods.

\subsection{The MART Scheme}
We note that the ART equation derived in the previous section is different from that in~\cite{JR:sbaramec}. In particular, due to the time-slotted nature of the current system and the queuing models, the numerators of the constituting terms of the ART are nonlinear (second-order polynomial) functions of $\rho$. Moreover, here, we also take into account the effect of transmit power on the ART, and we desire to obtain the optimal transmit power, $P^*$, and the optimal probability of task assignment to the RN, $\rho^*$, that lead to the minimum ART in the system. Specifically, we aim at addressing the following problem which is referred to as the MART problem:

\vspace*{-.2cm}
\begin{subequations}
\label{eq:MART_problem}
\begin{align}
& \displaystyle \min_{P,\rho} \ \ \ T(P,\rho) \label{eq:MART_objective},& \\
\text{s.t.}
& \quad C1: a <q^t_{sr}(P) ,				 \label{eq:t_sr_stability_con}&\\
& \quad C2: a <q^t_{rs}(P) ,				 \label{eq:t_rs_stability_con}&\\
& \quad C3: (1-\rho)a<q^t_{rh}(P) , \label{eq:t_rh_stability_con}&\\
& \quad C4: (1-\rho)a<q^t_{hr}(P) , \label{eq:t_hr_stability_con}&\\
& \quad C5: \rho a<q^c_r , 			 \label{eq:c_r_stability_con}&\\
& \quad C6: (1-\rho)a<q^c_h , 	 \label{eq:c_h_stability_con}&\\
& \quad C7: P\in [0,\hat{P}] ,					 \label{eq:pmax_con}&\\
&\quad  C8: \rho \in [0,1], 		 \label{eq:rho_domain_con}&
\end{align}
\end{subequations}
where, $C1-C6$ state the stability conditions for the buffers of the system; $C7$ indicates that the maximum power a node can use is $\hat{P}$ and $C8$ states the domain of $\rho$.

In general, problem~\eqref{eq:MART_problem} is not a convex optimization problem not least because constraints~\eqref{eq:t_sr_stability_con}-\eqref{eq:t_hr_stability_con} cannot be stated in the form $f_i(P,\rho)\leq0$ with $f_i(P,\rho)$s being convex functions, due to the fact that $q^t_{sr}(P)$, $q^t_{rs}(P)$, $q^t_{rh}(P)$, and $q^t_{hr}(P)$ may not be concave functions of $P$ (which is the case in particular for Rayleigh fading). Moreover, since the objective function~\eqref{eq:MART_objective} is the sum of several nonlinear functions of $P$ and $\rho$, getting its Hessian matrix is expensive and complicated; therefore, it is difficult to verify whether the objective function is convex or not, even in the case of Rayleigh fading. Thus, in the following, we analyze the properties of problem~\eqref{eq:MART_problem} to obtain clues for finding $P^*$ and $\rho^*$. We note that constraints~\eqref{eq:t_sr_stability_con} and~\eqref{eq:t_rs_stability_con} do not affect $\rho$ and if they are satisfied for a given $P\in [0,\hat{P}]$, then, the remaining constraints will determine the existence of feasible $\rho$.\looseness=-1

\begin{theorem}\label{thm:MART_feasible_rho}
For a given $P$ that satisfies~\eqref{eq:t_sr_stability_con},~\eqref{eq:t_rs_stability_con} and~\eqref{eq:pmax_con}, problem~\eqref{eq:MART_problem} has feasible values for $\rho$ if and only if 
\begin{equation}
\label{eq:rho_feasibility_cond}
\frac{q^c_r+\min(q^t_{rh}(P),q^c_h,q^t_{hr}(P))}{a}>1, 
\end{equation}
and the set of feasible values for $\rho$ is $\mathcal{A}=\mathcal{S}\cup(\max(0,1-\frac{\min(q^t_{rh}(P),q^c_{h},q^t_{hr}(P))}{a}),\min(1,\frac{q^c_r}{a}))$, where
\vspace*{-.2cm}\begin{eqnarray}\label{eq:boundaries}
 \mathcal{S} =  \left\{ \begin{array}{lll} 
\{1,0\} &  \mathrm{if} \ \ q^c_r> a , \min(q^t_{rh}(P),q^c_{h},q^t_{hr}(P))> a, \\  
\{1\} 	&  \mathrm{if} \ \ q^c_r> a  , \min(q^t_{rh}(P),q^c_{h},q^t_{hr}(P))= a, \\ 
\{0\}	  &  \mathrm{if} \ \ q^c_r= a  , \min(q^t_{rh}(P),q^c_{h},q^t_{hr}(P))> a, \\
\{\} 		&  \mathrm{otherwise}. \end{array} \right.
\end{eqnarray}
\end{theorem}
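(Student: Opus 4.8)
The plan is to exploit that $P$ is fixed and already satisfies \eqref{eq:t_sr_stability_con}, \eqref{eq:t_rs_stability_con} and \eqref{eq:pmax_con}, so that $C1$, $C2$, $C7$ play no further role and every channel/server service probability is a fixed constant. I would then reduce the four $\rho$-dependent stability constraints \eqref{eq:t_rh_stability_con}--\eqref{eq:c_h_stability_con}, together with the domain \eqref{eq:rho_domain_con}, to an explicit subset of $[0,1]$. First I would rewrite $C5$ as $\rho<q^c_r/a$. Then I would observe that $C3$, $C4$ and $C6$ all carry the common arrival rate $(1-\rho)a$, so they hold simultaneously iff $(1-\rho)a<\mu$, where $\mu=\min(q^t_{rh}(P),q^c_h,q^t_{hr}(P))$; equivalently $\rho>1-\mu/a$. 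Hence the feasible set for $\rho$ is exactly the intersection $[0,1]\cap(1-\mu/a,\,q^c_r/a)$.

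Next I would settle the feasibility condition \eqref{eq:rho_feasibility_cond}. The open interval $(1-\mu/a,\,q^c_r/a)$ is nonempty iff $1-\mu/a<q^c_r/a$, i.e. iff $q^c_r+\mu>a$, which is \eqref{eq:rho_feasibility_cond} after dividing by $a>0$. To promote this to a statement about the feasible set, I would check that whenever this interval is nonempty it cannot fall entirely outside $[0,1]$: since $0<\mu\le 1$ and $0<q^c_r\le 1$ while $0<a<1$, we have $1-\mu/a<1$ and $q^c_r/a>0$, which excludes both degenerate positions, so nonemptiness of the interval is equivalent to nonemptiness of its intersection with $[0,1]$. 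For the converse I would note that if the interval is empty then $q^c_r\le a$ and $\mu\le a$, so neither $\rho=0$ nor $\rho=1$ can be feasible either; thus the feasible set is empty precisely when \eqref{eq:rho_feasibility_cond} fails.

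Finally I would pin down $\mathcal{A}$, and in particular the endpoint set $\mathcal{S}$ of \eqref{eq:boundaries}, by intersecting $(1-\mu/a,\,q^c_r/a)$ with the closed interval $[0,1]$. The interior contributes $(\max(0,1-\mu/a),\,\min(1,q^c_r/a))$. The two endpoints have to be handled separately: at $\rho=1$ the HS-side arrival rate $(1-\rho)a$ vanishes, so $C3$, $C4$, $C6$ collapse to the inequalities $0<q^t_{rh}(P)$, $0<q^t_{hr}(P)$, $0<q^c_h$ (true, since the service probabilities are strictly positive), leaving only $C5$, i.e. $a<q^c_r$; symmetrically, at $\rho=0$ the RS-side rate $\rho a$ vanishes and only $a<\mu$ survives. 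Therefore $1\in\mathcal{S}$ iff $q^c_r>a$ and $0\in\mathcal{S}$ iff $\mu>a$, and organizing these conditions by the signs of $q^c_r-a$ and $\mu-a$ yields the case distinction in \eqref{eq:boundaries}.

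I expect the last step to be the main obstacle. Because the stability requirements are \emph{strict} inequalities and the arrival rates degenerate at $\rho\in\{0,1\}$, the open/closed status of the endpoints $\max(0,1-\mu/a)$ and $\min(1,q^c_r/a)$ cannot simply be inherited from the open-interval description but must be verified pointwise; keeping this bookkeeping consistent with the strict-inequality stability model is exactly what the piecewise $\mathcal{S}$ in \eqref{eq:boundaries} is recording.
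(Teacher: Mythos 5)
Your overall strategy is sound, and it supplies more detail than the paper itself, whose ``proof'' is a one-line deferral to the first theorem of~\cite{JR:sbaramec}; there is no written argument in the paper to compare against. Your reduction of $C3$, $C4$, $C6$ to the single condition $\rho>1-\mu/a$ with $\mu=\min(q^t_{rh}(P),q^c_h,q^t_{hr}(P))$, the identification of the feasible set with $[0,1]\cap(1-\mu/a,\,q^c_r/a)$, and the equivalence of nonemptiness with~\eqref{eq:rho_feasibility_cond} (using $1-\mu/a<1$ and $q^c_r/a>0$ to rule out the interval falling outside $[0,1]$) are all correct. One small caveat: your endpoint argument invokes strict positivity of $q^t_{rh}(P)$, $q^t_{hr}(P)$, $q^c_h$; that holds for Rayleigh fading but is an assumption for general CDFs, so state it.

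The genuine problem is the final step. Your pointwise analysis correctly gives that $\rho=1$ is feasible iff $q^c_r>a$ (with no condition on $\mu$) and $\rho=0$ is feasible iff $\mu>a$ (with no condition on $q^c_r$) --- two \emph{independent} conditions. This does not ``yield the case distinction in~\eqref{eq:boundaries}'': the printed $\mathcal{S}$ admits $1$ only when $q^c_r>a$ \emph{and} $\mu\geq a$, and admits $0$ only when $\mu>a$ \emph{and} $q^c_r\geq a$. The two descriptions disagree when $q^c_r>a$ but $\mu<a$ (then $\rho=1$ satisfies all of $C3$--$C6$, yet $\mathcal{S}=\{\}$ and the open interval $(1-\mu/a,\,1)$ excludes it), and symmetrically when $\mu>a$ but $q^c_r<a$ (then $\rho=0$ is feasible but $\mathcal{A}=(0,\,q^c_r/a)$ omits it). There is no further constraint in~\eqref{eq:MART_problem} that eliminates these endpoints, so your (correct) derivation actually contradicts the statement in these two subcases rather than establishing it. You need to either exhibit the missing constraint, or explicitly record that the endpoint set you derive is $\{1\}\cdot I(q^c_r>a)\cup\{0\}\cdot I(\mu>a)$ and note the discrepancy with~\eqref{eq:boundaries}; asserting that your conditions ``organize into''~\eqref{eq:boundaries} is the one step of the write-up that is false as written.
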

\begin{proof}
This can be proved easily similar to the first theorem in~\cite{JR:sbaramec}.
\end{proof}

In the following theorem, $I(x)$ is unit step function defined to be one if $x>0$ and zero otherwise.
\begin{theorem}\label{thm:MART_feasible_P}
If problem~\eqref{eq:MART_problem} is feasible, then $P \in (\check{P},\hat{P}]$, where $\check{P}=\max(\check{P}_1,\check{P}_2)<\hat{P}$, $\check{P}_1=\max({q^t_{sr}}^{-1}(a),{q^t_{rs}}^{-1}(a))$, $\check{P}_2=I(a-q^c_r)\max({q^t_{rh}}^{-1}(a-q^c_r),{q^t_{hr}}^{-1}(a-q^c_r))$, where superscript ``$-1$" for a functions indicates the inverse of that function.
In the special case of Rayleigh fading channels, $\check{P}_1=-\frac{\max(\Gamma_{sr},\Gamma_{rs})}{\ln a}, \check{P}_2=-I(a-q^c_r)\frac{\max(\Gamma_{rh},\Gamma_{hr})}{\ln(a-q^c_r)}$.
\end{theorem}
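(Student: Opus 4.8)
The plan is to reduce the feasibility of the full problem to a lower bound on $P$ by combining Theorem~\ref{thm:MART_feasible_rho} with the assumed monotonicity of the channel service probabilities in $P$. Problem~\eqref{eq:MART_problem} is feasible precisely when there exists some $P\in[0,\hat{P}]$ for which $C1$ and $C2$ hold and, simultaneously, a feasible $\rho$ exists for that $P$. By Theorem~\ref{thm:MART_feasible_rho}, the existence of a feasible $\rho$ is equivalent to condition~\eqref{eq:rho_feasibility_cond}, namely $q^c_r+\min(q^t_{rh}(P),q^c_h,q^t_{hr}(P))>a$. Hence the whole task is to characterize the set of $P$ that simultaneously satisfy $C1$, $C2$, and~\eqref{eq:rho_feasibility_cond}, and to show that this set is exactly $(\check{P},\hat{P}]$.

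First I would dispose of the link-stability constraints $C1$ and $C2$. Since $q^t_{sr}(P)$ and $q^t_{rs}(P)$ are increasing functions of $P$ (as assumed in Section~\ref{sec:sysmodel}), each is invertible, so $a<q^t_{sr}(P)$ is equivalent to $P>{q^t_{sr}}^{-1}(a)$, and likewise for $C2$. Imposing both yields $P>\check{P}_1=\max({q^t_{sr}}^{-1}(a),{q^t_{rs}}^{-1}(a))$, which is the first of the two quantities defining $\check{P}$.

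Next I would rewrite~\eqref{eq:rho_feasibility_cond} as $\min(q^t_{rh}(P),q^c_h,q^t_{hr}(P))>a-q^c_r$ and split into two cases according to the sign of $a-q^c_r$, which is exactly what the indicator $I(a-q^c_r)$ records. When $q^c_r\geq a$, the right-hand side is nonpositive while every service probability is nonnegative, so the $P$-dependent inequalities $q^t_{rh}(P)>a-q^c_r$ and $q^t_{hr}(P)>a-q^c_r$ impose no restriction; this is the $I(a-q^c_r)=0$ branch, giving $\check{P}_2=0$. When $q^c_r<a$, i.e. $I(a-q^c_r)=1$, feasibility forces $q^c_h>a-q^c_r$ (a $P$-free necessary condition, automatically in force under the standing feasibility assumption), and the two remaining inequalities, again by monotonicity, are equivalent to $P>{q^t_{rh}}^{-1}(a-q^c_r)$ and $P>{q^t_{hr}}^{-1}(a-q^c_r)$, i.e. $P>\check{P}_2$. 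Combining both blocks gives $P>\max(\check{P}_1,\check{P}_2)=\check{P}$, while $C7$ gives $P\leq\hat{P}$; since the problem is assumed feasible this window is nonempty, hence $\check{P}<\hat{P}$. Finally, for Rayleigh fading I would invert the closed forms $q^t_{sr}=e^{-\Gamma_{sr}/P}$, etc.: solving $e^{-\Gamma/P}=a$ gives $P=-\Gamma/\ln a$, and since $0<a<1$ and $0<a-q^c_r<1$ the positive factors $-1/\ln a$ and $-1/\ln(a-q^c_r)$ let the maxima pass through to the $\Gamma$ terms, producing the stated $\check{P}_1$ and $\check{P}_2$.

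The inversions themselves are routine; the only step needing care is the case split induced by $I(a-q^c_r)$, and within it the separation of which factors of~\eqref{eq:rho_feasibility_cond} depend on $P$ (namely $q^t_{rh}(P)$ and $q^t_{hr}(P)$) from those that do not ($q^c_r$ and $q^c_h$). Getting this right is what ensures the $P$-free necessary condition $q^c_r+q^c_h>a$ is absorbed into the feasibility hypothesis rather than spuriously contributing to the threshold $\check{P}$.
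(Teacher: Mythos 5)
Your proposal is correct and follows essentially the same route as the paper's proof: reduce via Theorem~\ref{thm:MART_feasible_rho} to condition~\eqref{eq:rho_feasibility_cond}, invert the monotone channel service probabilities to turn $C1$, $C2$ into $P>\check{P}_1$, split on the sign of $a-q^c_r$ to obtain $P>\check{P}_2$, and combine with $C7$ before specializing to Rayleigh fading. The only (harmless) overstatement is the opening claim that the feasible set of $P$ is \emph{exactly} $(\check{P},\hat{P}]$ --- the theorem asserts only necessity, and your own treatment of the $P$-free condition $q^c_h>a-q^c_r$ shows why sufficiency need not hold.
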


\begin{proof}
If the problem is feasible, there exists $P$ and $\rho$ that satisfy the constraints. Based on~\eqref{eq:t_sr_stability_con} and~\eqref{eq:t_rs_stability_con}, we have the inequality $\min(q^t_{sr}(P),q^t_{rs}(P))>a$ which is equivalent to $P>\check{P}_1=\max({q^t_{sr}}^{-1}(a),{q^t_{rs}}^{-1}(a))$. Moreover, based on~\eqref{eq:rho_feasibility_cond}, we have $\min(q^t_{rh}(P),q^c_h,q^t_{hr}(P))>a-q^c_r$; this necessitates the inequality $\min(q^t_{rh}(P),q^t_{hr}(P))>a-q^c_r$ which holds for any $P>0$ if $a-q^c_r\leq 0$, and for $P>\max({q^t_{rh}}^{-1}(a-q^c_r),{q^t_{hr}}^{-1}(a-q^c_r))$ if $a-q^c_r>0$; these two cases can be concisely stated as $P>\check{P}_2=I(a-q^c_r)\max({q^t_{rh}}^{-1}(a-q^c_r),{q^t_{hr}}^{-1}(a-q^c_r))$. Hence, $P>\check{P}=\max(\check{P}_1,\check{P}_2)$. Noting~\eqref{eq:pmax_con}, we conclude that $\check{P}<P\leq\hat{P}$.
For Rayleigh fading channels, by substituting the corresponding equations for the channel service probabilities, we get $\check{P}_1=-\frac{\max(\Gamma_{sr},\Gamma_{rs})}{\ln a}$, $\check{P}_2=-I(a-q^c_r)\frac{\max(\Gamma_{rh},\Gamma_{hr})}{\ln(a-q^c_r)}$. 
\end{proof}

For conciseness, in the following theorem we use $q^t_{rhr}(P)$ to refer to $\min(q^t_{rh}(P),q^t_{hr}(P))$.
\begin{theorem}\label{thm:A_size}
By increasing $P$ from $\check{P}$ towards $\hat{P}$, one of the following cases will occur for the size of $\mathcal{A}$:
\begin{itemize}
	\item $H1$: It does not change, if $q^t_{rhr}(\check{P})\geq \min(q^c_h,a)$.
	\item $H2$: It continuously increases, if $q^t_{rhr}(\hat{P})\leq \min(q^c_h,a)$.
	\item $H3$: It continuously increases as $P$ increases from $\check{P}$ up to $\tilde{P}=\max({q^t_{rh}}^{-1}(\min(q^c_h,a)),{q^t_{hr}}^{-1}(\min(q^c_h,a)))$ and then it does not change, if $q^t_{rhr}(\check{P})<\min(q^c_h,a)$ and $q^t_{rhr}(\hat{P})>\min(q^c_h,a)$. In the case of Rayleigh fading, $\tilde{P}=-\frac{\max(\Gamma_{rh},\Gamma_{hr})}{\ln(\min(q^c_h,a))}$.
\end{itemize}
\end{theorem}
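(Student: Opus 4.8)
The plan is to recognize that the ``size'' of $\mathcal{A}$ is simply the length of the open interval appearing in Theorem~\ref{thm:MART_feasible_rho}, since $\mathcal{S}$ contributes only the isolated points $0$ and/or $1$ and therefore carries no length (and changes only at the measure-zero $P$-values where an inequality becomes an equality). Writing the interval as $(L(P),U)$ with upper endpoint $U=\min(1,q^c_r/a)$ and lower endpoint $L(P)=\max(0,1-\frac{\min(q^t_{rh}(P),q^c_h,q^t_{hr}(P))}{a})$, I would first observe that $U$ contains no channel service probability and hence is independent of $P$. Thus the entire $P$-dependence of the size resides in $L(P)$, the size equals $U-L(P)$, and its monotonicity in $P$ is exactly the reverse of that of $L(P)$.

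Next I would analyze $L(P)$ through the quantity $\min(q^t_{rhr}(P),q^c_h)$, where $q^t_{rhr}(P)=\min(q^t_{rh}(P),q^t_{hr}(P))$ is increasing in $P$ as a minimum of two increasing functions, while $q^c_h$ is a constant. The subtlety is that two separate mechanisms can freeze $L(P)$: the inner $\min$ saturating at $q^c_h$, and the outer $\max$ clamping at $0$ once $\min(q^t_{rhr}(P),q^c_h)\geq a$. I would show these combine into a single threshold $q^t_{rhr}(P)=\min(q^c_h,a)$: when $q^t_{rhr}(P)<\min(q^c_h,a)$ we have $\min(q^t_{rhr}(P),q^c_h)=q^t_{rhr}(P)<a$, so $L(P)=1-q^t_{rhr}(P)/a$ is strictly decreasing; whereas once $q^t_{rhr}(P)\geq\min(q^c_h,a)$, a short case check (on whether $q^c_h\leq a$ or not) gives $L(P)=\max(0,1-\min(q^c_h,a)/a)$, a constant. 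Disentangling the two freezing mechanisms into this one threshold is the step I expect to require the most care.

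Finally I would translate this dichotomy into $H1$--$H3$ using the monotonicity and continuity of $q^t_{rhr}$. Since $q^t_{rhr}(\check{P})\geq\min(q^c_h,a)$ forces $q^t_{rhr}(P)\geq\min(q^c_h,a)$ for every $P\in(\check{P},\hat{P}]$, the size stays constant, giving $H1$; symmetrically $q^t_{rhr}(\hat{P})\leq\min(q^c_h,a)$ keeps $P$ in the strictly-decreasing-$L$ regime throughout, giving $H2$. In the remaining case $q^t_{rhr}(\check{P})<\min(q^c_h,a)<q^t_{rhr}(\hat{P})$, the intermediate value theorem yields a unique crossing point $\tilde{P}$; solving $\min(q^t_{rh}(\tilde{P}),q^t_{hr}(\tilde{P}))=\min(q^c_h,a)$ via the pair of conditions $q^t_{rh}(\tilde{P})\geq\min(q^c_h,a)$ and $q^t_{hr}(\tilde{P})\geq\min(q^c_h,a)$ gives $\tilde{P}=\max({q^t_{rh}}^{-1}(\min(q^c_h,a)),{q^t_{hr}}^{-1}(\min(q^c_h,a)))$, establishing $H3$. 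For the Rayleigh case I would substitute ${q^t_{rh}}^{-1}(y)=-\Gamma_{rh}/\ln y$ and ${q^t_{hr}}^{-1}(y)=-\Gamma_{hr}/\ln y$; since $\min(q^c_h,a)<1$ makes $\ln(\min(q^c_h,a))<0$ and hence $-1/\ln(\min(q^c_h,a))>0$, the larger inverse corresponds to the larger gap parameter, yielding $\tilde{P}=-\frac{\max(\Gamma_{rh},\Gamma_{hr})}{\ln(\min(q^c_h,a))}$.
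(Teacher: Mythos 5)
Your proposal is correct and follows essentially the same route as the paper: the upper endpoint $\min(1,q^c_r/a)$ is $P$-independent, the lower endpoint is governed by $\min(q^c_h,q^t_{rhr}(P))$ with $q^t_{rhr}$ increasing, and the three cases follow from where $q^t_{rhr}$ sits relative to the single threshold $\min(q^c_h,a)$ at $\check{P}$ and $\hat{P}$. Your explicit handling of the measure-zero contribution of $\mathcal{S}$ and the unification of the two freezing mechanisms into one threshold are slightly more careful than the paper's case discussion, but they do not constitute a different argument.
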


\begin{proof}
According to Theorem~\ref{thm:MART_feasible_rho}, the upper boundary of $\mathcal{A}$ is independent from $P$ and the lower boundary depends on $\min(q^c_h,q^t_{rhr})$. Based on Theorem~\ref{thm:MART_feasible_P}, $q^t_{rhr}$ is minimum at $\check{P}$ and maximum at $\hat{P}$, as $q^t_{rh}$ and $q^t_{hr}$ are increasing functions of $P$. Hence, we have the following cases:

\begin{itemize}
\item If $q^t_{rhr}(\check{P})\geq \min(q^c_h,a)$, either $a>q^c_h$ and $q^t_{rhr}(\check{P})\geq q^c_h$ and hence, the lower boundary of $\mathcal{A}$ is determined by $1-\frac{q^c_h}{a}$, or $q^c_h\geq a$ and $q^t_{rhr}(\check{P})\geq a$ and hence, the lower boundary of $\mathcal{A}$ is 0. In either case, the lower boundary is independent from $q^t_{rh}$, $q^t_{hr}$ and $P$. This remains true as $P$ increases and hence, case $H1$ holds.

\item If $q^t_{rhr}(\hat{P})\leq q^c_h$ and $q^t_{rhr}(\hat{P})\leq a$, the lower boundary of $\mathcal{A}$ is determined by $1-\frac{q^t_{rhr}(P)}{a}$ for any $P\in(\check{P},\hat{P}]$. Therefore, as $P$ increases in that interval, $q^t_{rhr}(P)$ increases and the lower boundary of $\mathcal{A}$ decreases and hence, the size of $\mathcal{A}$ increases. Therefore, case $H2$ holds.

\item If none of the conditions above holds, then $q^t_{rhr}(\check{P})< q^c_h$ and $q^t_{rhr}(\check{P})< a$ but $q^t_{rhr}(\hat{P})> q^c_h$ or $q^t_{rhr}(\hat{P})>a$. In this situation, at low values of $P\in(\check{P},\hat{P}]$, $q^t_{rhr}(P)$ determines the lower boundary of $\mathcal{A}$. Hence, as $P$ increases, $q^t_{rhr}(P)$ increases and the lower boundary decreases and the size of $\mathcal{A}$ increases; but this continues up to the point that $q^t_{rhr}(P)$ equals $q^c_h$ or $a$, whichever happens first, i.e., up to $\tilde{P}=\max({q^t_{rh}}^{-1}(\min(q^c_h,a)),{q^t_{hr}}^{-1}(\min(q^c_h,a)))$, because after that point, either $q^c_h$ determines the lower boundary or the lower boundary is 0. This proves case $H3$. For Rayleigh fading, exploiting the equations for channel service probabilities, we have $\tilde{P}=-\frac{\max(\Gamma_{rh},\Gamma_{hr})}{\ln(\min(q^c_h,a))}$.\looseness=-1
\end{itemize}
\end{proof}

\vspace*{-.2cm}
\begin{theorem}\label{thm:MART_optimal_P}
$P^*$ is equal to $\hat{P}$.
\end{theorem}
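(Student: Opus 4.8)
The plan is to show that, for every fixed admissible $\rho$, the objective $T(P,\rho)$ is a strictly decreasing function of $P$, while increasing $P$ can never destroy feasibility. Together these imply that for any feasible pair $(P,\rho)$ the pair $(\hat{P},\rho)$ is also feasible and satisfies $T(\hat{P},\rho)\leq T(P,\rho)$; hence the minimum over $P$ is always attained at the upper endpoint, i.e. $P^*=\hat{P}$.

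First I would isolate the dependence of $T$ on $P$. Substituting~\eqref{eq:Tr} and~\eqref{eq:Th} into~\eqref{eq:T} and collecting terms, the contributions $\rho W(\rho a,q^c_r)$ and $(1-\rho)W((1-\rho)a,q^c_h)$ involve only the computation service probabilities and are independent of $P$, whereas every remaining term has the form $c\,W(x,q^t(P))$ with a nonnegative coefficient $c\in\{1,1-\rho\}$, an arrival-type argument $x\in\{a,(1-\rho)a\}$, and a channel service probability $q^t\in\{q^t_{sr},q^t_{rs},q^t_{rh},q^t_{hr}\}$. Since $\partial W/\partial y=-\,(1-x)/(y-x)^2<0$ for $x<1$ and $y>x$ (and here $x\leq a<1$ guarantees $1-x>0$, where $a<1$ is needed for finite waiting times), while each $q^t$ is an increasing function of $P$ by assumption, the chain rule gives $\tfrac{d}{dP}W(x,q^t(P))<0$. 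A nonnegatively-weighted sum of decreasing functions is decreasing, so $T(\cdot,\rho)$ is decreasing in $P$ for every fixed $\rho$.

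Next I would verify that raising $P$ to $\hat{P}$ keeps the point feasible. Among $C1$--$C8$, only the transmission-buffer stability constraints $C1$--$C4$ depend on $P$; constraints $C5,C6$ involve only $q^c_r,q^c_h$, $C8$ involves only $\rho$, and $C7$ is satisfied by $\hat{P}$ trivially. Because $q^t_{sr},q^t_{rs},q^t_{rh},q^t_{hr}$ are increasing in $P$, any $P\in(\check{P},\hat{P}]$ satisfying $C1$--$C4$ continues to satisfy them after being increased to $\hat{P}$; equivalently, by Theorem~\ref{thm:A_size} the feasible set $\mathcal{A}$ of $\rho$ never shrinks as $P$ grows, so $(\hat{P},\rho)$ is feasible whenever $(P,\rho)$ is. Combining this with the monotonicity above yields $P^*=\hat{P}$.

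The argument is conceptually direct; the only care required is the bookkeeping in collecting the $P$-dependent terms and confirming that each relevant numerator $1-x$ is strictly positive. I do not expect a genuine obstacle: the essential content is simply that each $W(x,\cdot)$ decreases in its service-probability argument and every such argument increases with $P$, so more power can only reduce waiting and widen the feasible region.
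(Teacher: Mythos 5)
Your proof is correct and follows essentially the same route as the paper's: both rest on the facts that each $W(x,q^t(P))$ term decreases as the channel service probabilities increase with $P$, and that raising $P$ never shrinks the feasible set of $\rho$. Your version merely argues in the opposite (and slightly more rigorous) direction — from an arbitrary feasible $(P,\rho)$ up to $(\hat{P},\rho)$ — whereas the paper starts at $\hat{P}$ and observes that decreasing $P$ can only worsen the objective.
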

\begin{proof}
Let $\hat{\mathcal{S}}$ denote the feasible set of $\rho$ at $P=\hat{P}$. For any $\rho\in\hat{\mathcal{S}}$, decreasing the $P$ will decrease $q^t_{sr}$, $q^t_{rs}$, $q^t_{rh}$, $q^t_{hr}$ and hence, according to Theorems~\ref{thm:MART_feasible_rho}-\ref{thm:A_size}, the problem may remain feasible or not; but according to~\eqref{eq:T}, the objective function will certainly increase. This concludes the theorem.
\end{proof}

The above-mentioned theorem can be explained as follows. As $P$ increases, the right-hand sides of the constraints~\eqref{eq:t_rh_stability_con} and~\eqref{eq:t_hr_stability_con} increase, which allow for a larger range of feasible $\rho$ and therefore, the possibility for a $\rho$ with lower ART increases. Moreover, the higher the $P$, the lower the delay at the transmission queues.\looseness=-1

Based on the insights gained by the discussions above, problem~\eqref{eq:MART_problem} can be transformed to a single-variable convex optimization problem. For that, the channel service probabilities should be calculated using $P^*=\hat{P}$, and substituted in the problem. 

\begin{theorem}\label{thm:MART_convexity_rho}
For a given $P$, in particular for $P^*$, problem~\eqref{eq:MART_problem} is a convex optimization problem with respect to $\rho$.\looseness=-1
\end{theorem}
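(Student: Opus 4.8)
The plan is to establish the two ingredients that make \eqref{eq:MART_problem} a convex program in the scalar variable $\rho$ with $P$ held fixed: a convex feasible set and a convex objective. First I would note that once $P$ is fixed, every channel and computation service probability ($q^t_{sr}$, $q^t_{rs}$, $q^t_{rh}$, $q^t_{hr}$, $q^c_r$, $q^c_h$) is a constant, so the only constraints that involve $\rho$ are $C3$--$C6$ together with $C8$. Each of these is a single affine inequality in $\rho$ (for instance $C5$ reads $\rho<q^c_r/a$ and $C6$ reads $\rho>1-q^c_h/a$), so their intersection with the box $[0,1]$ is an interval; equivalently, this is precisely the set $\mathcal{A}$ characterised in Theorem~\ref{thm:MART_feasible_rho}, which is an interval (possibly half-open) and hence convex.

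For the objective, I would first simplify $T(P,\rho)$. Substituting \eqref{eq:Tr} and \eqref{eq:Th} into \eqref{eq:T}, the two terms $W(a,q^t_{sr})$ and $W(a,q^t_{rs})$ are common to $T_r$ and $T_h$ and are therefore weighted by $\rho+(1-\rho)=1$, collapsing to an additive constant. What remains (up to that constant) is
\[
\rho\,W(\rho a,q^c_r)+(1-\rho)\bigl[W((1-\rho)a,q^t_{rh})+W((1-\rho)a,q^c_h)+W((1-\rho)a,q^t_{hr})\bigr],
\]
where each summand has the common shape $t\,W(ta,y)$ for a constant service probability $y\in\{q^c_r,q^t_{rh},q^c_h,q^t_{hr}\}$ and with $t$ equal to either $\rho$ or $1-\rho$.

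The crux is therefore a one-line lemma: the auxiliary function
\[
\phi(t)=t\,W(ta,y)=\frac{t(1-ta)}{y-ta}
\]
is convex on the stability region $\{t:ta<y\}$. I would prove this by a direct second-derivative computation, which after the cancellations yields
\[
\phi''(t)=\frac{2ay(1-y)}{(y-ta)^3}\ge 0,
\]
the inequality holding because $a>0$, $0<y\le 1$, and $y-ta>0$ on the stability region. This is the step I expect to be the main (and essentially the only) obstacle: it requires carefully expanding the numerator of $\phi''$ and noticing that $N'D-ND'=(ta-y)^2+y(1-y)$, which makes the positivity transparent and produces the clean form above. Once the lemma is in hand, the term $\rho\,W(\rho a,q^c_r)$ is $\phi$ evaluated at $t=\rho$, while the three remaining terms are $\phi$ evaluated at the affine argument $t=1-\rho$; since convexity is preserved under precomposition with an affine map and under summation, the whole objective is a constant plus a sum of convex functions, hence convex in $\rho$. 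Combined with the convex feasible set, this shows that for any fixed $P$ (in particular $P^*$) the problem \eqref{eq:MART_problem} is a convex program in $\rho$, so it can be solved by a one-dimensional search.
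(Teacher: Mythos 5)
Your proposal is correct and follows essentially the same route as the paper's proof: the constraints are affine in $\rho$ for fixed $P$, the objective decomposes into a constant plus rational terms of the form $\frac{t(1-ta)}{y-ta}$ with $t=\rho$ or $t=1-\rho$, and the second derivative $\frac{2ay(1-y)}{(y-ta)^3}\ge 0$ on the stability region gives convexity. The only (cosmetic) difference is that you unify the paper's two function forms $f_1$ and $f_2$ into a single lemma composed with an affine map, which is a slightly cleaner packaging of the identical computation.
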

\begin{proof}
For a given $P$, channel service probabilities are constant and the constraints of problem~\eqref{eq:MART_problem} are linear functions with respect to $\rho$. Thus, we only need to prove that the objective function is convex or, equivalently, show that it has a positive second derivative~\cite{book:nonlinear_bertsekas}.~\eqref{eq:Wxy}-\eqref{eq:T} indicate that the objective function is composed of the functions of the forms $f_0=\frac{1-a}{q_0-a}$, $f_1(\rho)=\frac{\rho-\rho^2a}{q_1-\rho a}$ and $f_2(\rho)=\frac{(1-\rho)-(1-\rho)^2a}{q_2-(1-\rho)a}$, where $q_0$, $q_1$, and $q_2$ represent the symbols for transmission/computation service probabilities and $f_0$ is a function that does not depend on $\rho$. The second derivatives of $f_1(\rho)$ and $f_2(\rho)$ are as follows:
\vspace*{-.2cm}
\begin{equation}\label{eq:f_second_der}
f_1^{''}(\rho) = \frac{2aq_1(1-q_1)}{(q_1-\rho a)^3} \ ,\ \ \ f_2^{''}(\rho) = \frac{2aq_2(1-q_2)}{(q_2-(1-\rho)a)^3}.
\vspace*{-.2cm}
\end{equation}
It is clear that the values of $a$, $q_1$, and $q_2$ are in the interval $(0,1]$ and, according to the constraints, $(q_1-\rho a)>0$, $(q_2-(1-\rho)a>0)$ for feasible $\rho$. Therefore, $f_1^{''}(\rho)>0$ and $f_2^{''}(\rho)>0$ hold and hence, $f_1(\rho)$ and $f_2(\rho)$ are convex functions. Consequently, it is inferred that for a given $P$ and in particular for $P^*$, the objective function~\eqref{eq:MART_objective} is convex with respect to $\rho$. 
\end{proof}

According to Theorem~\ref{thm:MART_convexity_rho}, after substituting the channel service probabilities (at $P^*=\hat{P}$) in the problem, the resulted problem will be a single-variable convex optimization problem with respect to $\rho$. Therefore, a one-dimensional search algorithm can be utilized to easily find the $\rho^*$. For example, Golden section method~\cite{book:nonlinear_bertsekas} is an efficient method in this regard, with the computational complexity of $O(\log(1/\delta))$, where $\delta$ is the desired relative error bound. This method can be easily implemented in the RN or the HN as they both have computation capability.

\begin{figure}[b]
\centering
\includegraphics[scale=.35]{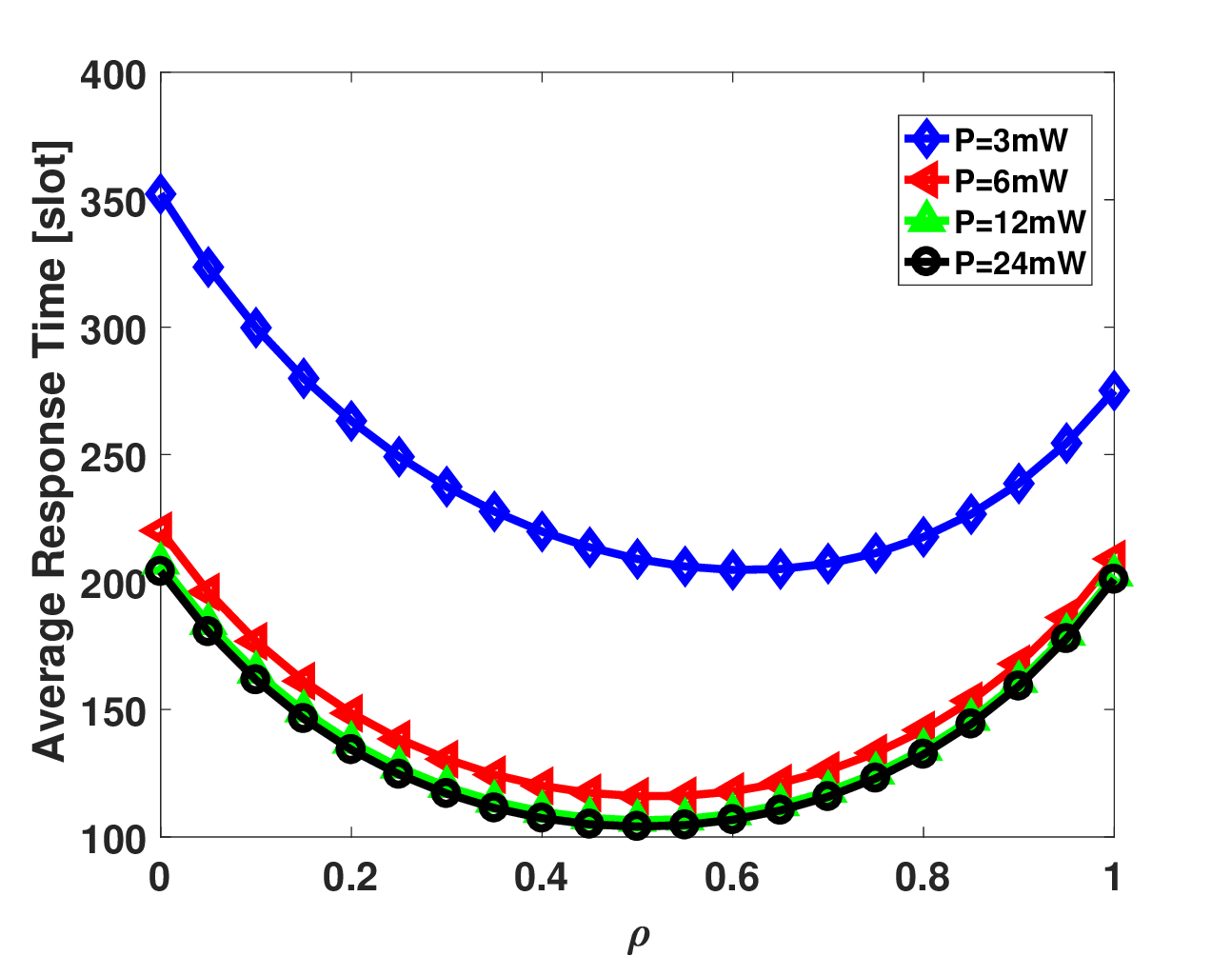}
\caption{Effect of different values of $P$ on the ART in a system with $a=0.01$, $q^c_r=q^c_h=0.015$, and Rayleigh fading with $\Gamma_{sr}=\Gamma_{rs}=\Gamma_{rh}=\Gamma_{hr}=0.01$.}
\vspace*{-.5cm}
\label{fig:efficiency}
\end{figure}

\subsection{The MARE Scheme}
In many scenarios, we are interested to have a low ART, not necessarily the minimum one, and at the same time use the system energy efficiently. To clarify that, Fig.~\ref{fig:efficiency} illustrates the ART with respect to $\rho$ for different values of $P$, in a system with $a=0.01$, $q^c_r=q^c_h=0.015$, and Rayleigh fading with $\Gamma_{sr}=\Gamma_{rs}=\Gamma_{rh}=\Gamma_{hr}=0.01$. It is observed that for any $P$, the ART is a convex function with respect to $\rho$ and a global minimum exists. This confirms the discussions in the previous section and indicates that using the maximum $P$ and the optimal value of $\rho$, the ART of the system can be minimized. However, we also observe that for any $\rho$, when $P$ is increased from $3$ mW to $6$ mW, there is a significant improvement in the ART; but when $P$ is increased more, the ART does not change much. In other words, the Rate of Decrease (RoD) in the ART is very smaller than the Rate of Increase (RoI) in the energy consumption.

Noting the aforementioned, we are motivated to strike a balance between the energy consumption in the system and the response time of the tasks. We know that the energy consumption on a link is proportional to the average power consumption of that link in each slot. On the other hand, the average power consumption of a link is proportional to the ratio of the time slots that there is a transmission on that link. Hence, the average power consumptions in each slot on the SN-RN, RN-SN, RN-HN, and HN-RN links are respectively equal to $aP$, $aP$, $(1-\rho)aP$, and $(1-\rho)aP$. We define the SAP in the system as the sum of the average power consumptions of all the links in each slot, i.e.,
\begin{equation}\label{eq:P_s}
\bar{P}_s(P,\rho)=2(2-\rho)aP. 
\end{equation}

In order to use the system energy efficiently, we take into account both the RoI in the energy consumption and the RoD in the ART and define a novel objective function as the product of the SAP and the ART in the system. This product can be interpreted as the average energy consumption during the response time of the tasks. Thus, we refer to it as the ARE, denoted by
\begin{equation}\label{eq:ARE}
E(P,\rho)=\bar{P}_s(P,\rho)T(P,\rho),
\end{equation}
and propose the MARE offloading scheme with the following optimization problem:

\vspace*{-.2cm}
\begin{subequations}
\label{eq:MARE_problem}
\begin{align}
& \displaystyle \min_{P,\rho} \ \ \ E(P,\rho)  \label{eq:MARE_objective},& \\
\text{s.t.}
& \quad C1-C8. &
\end{align}
\end{subequations}

Similar to the MART problem, in general, the MARE problem is not a convex optimization problem. It is clear that Theorems~\ref{thm:MART_feasible_rho}-\ref{thm:A_size} hold also for the MARE problem, since it has the same constraints as the MART problem. However, due to the multiplication of $(2-\rho)P$ and $T(P,\rho)$ terms in the objective function, neither of Theorems~\ref{thm:MART_optimal_P} and~\ref{thm:MART_convexity_rho} are applicable for the MARE problem. Therefore, another numerical method should be used to find $P^*$ and $\rho^*$.
Even, running an exhaustive search with a desired precision over two dimensions of $P$ and $\rho$ is not very expensive, computationally, for two reasons. First, it is only over two dimensions. Second, the problem is only needed to be solved offline and once, i.e., at the beginning of the operation, not in every time slot; after that, all the nodes will use $P^*$ for their transmissions and the RN will work based on $\rho^*$ whenever it receives a task from the SN. However, reducing the needed steps for solving a problem is always desirable. In the following, we discuss the properties of the objective function with respect to $\rho$ in general and with respect to $P$ in the especial case of Rayleigh fading; these provide directions to reduce the number of iterations for searching the solution of the MARE problem.

\textbf{Remark 1:} For a given $P$, the constraints of the MARE problem are linear with respect to $\rho$. The objective function $E(P,\rho)$ is the sum of the linear decreasing functions, namely $(2-\rho)aPW(a,q^t_{sr})$, $(2-\rho)aPW(a,q^t_{rs})$, and the rational functions having the forms $f_3=aP(2-\rho)\rho\frac{1-\rho a}{q_3-\rho a}$ and $f_4=aP(2-\rho)(1-\rho)\frac{1-(1-\rho) a}{q_4-(1-\rho) a}$.
The derivatives of $f_3$ and $f_4$ with respect to $\rho$ are obtained as
\begin{subequations}
\label{eq:f3_f4_der}
\begin{align}
f_3^{'}&=2aP(1-\rho)\left[\frac{1-\rho a}{q_3-\rho a} + \frac{\rho a(1-q_3)}{(q_3-\rho a)^2}\right] \label{eq:f3_der},& \\
f_4^{'}&=aP\left[\frac{(2\rho-3)(1-(1-\rho)a)}{q_4-(1-\rho)a}\right.& \nonumber \\
&\left.+(2-\rho)(1-\rho)a\frac{q_4-1}{(q_4-(1-\rho)a)^2}\right].\label{eq:f4_der}&
\end{align}
\end{subequations}
\begin{algorithm}[!t]
\caption{$K$-Partition Search Over $\rho$}
\label{alg:k-partition}
\begin{algorithmic}[1]
\STATE set $\varphi=\frac{3-\sqrt{5}}{2}$, $k=1$; $P$, $K$ and $\delta$ as intended.
\STATE calculate $q^t_{sr}$, $q^t_{rs}$, $q^t_{rh}$ and $q^t_{hr}$ using $P$.
\STATE set $\rho_l=\max(0,1-\frac{\min(q^t_{rh},q^c_{h},q^t_{hr})}{a})$, $\rho_u=\min(1,\frac{q^c_r}{a})$
\STATE \textbf{if} $\min(q^t_{rh},q^c_{h},q^t_{hr})\leq a$, $\rho_l=\rho_l+\delta$.
\STATE \textbf{if} $q^c_r\leq a$, $\rho_u=\rho_u-\delta$.
\STATE $\rho^*=\rho_l$.
\STATE \textbf{while} $k\leq K$
\STATE \hspace{4mm} $\rho_1=\rho_l+\frac{\rho_u-\rho_l}{K}(k-1)$, $\rho_4=\rho_l+\frac{\rho_u-\rho_l}{K}k$
\STATE \hspace{4mm} \textbf{while} $\rho_4-\rho_1>\delta$
\STATE \hspace{8mm} $\rho_2=\rho_1+\varphi(\rho_4-\rho_1)$, $\rho_3=\rho_4-\varphi(\rho_4-\rho_1)$.
\STATE \hspace{8mm} \textbf{if} $E(P,\rho_2)<E(P,\rho_3)$, $\rho_4=\rho_3$; \textbf{else} $\rho_1=\rho_2$.
\STATE \hspace{4mm} \textbf{end while}
\STATE \hspace{4mm} \textbf{if $E(P,\frac{\rho_1+\rho_4}{2})<E(P,\rho^*)$}, $\rho^*=\frac{\rho_1+\rho_4}{2}$.
\STATE \hspace{4mm} $k=k+1$.
\STATE \textbf{end while}
\STATE return $\rho^*$.
\end{algorithmic}
\vspace{-.1cm}
\end{algorithm}
\begin{algorithm}[!t]
\caption{MARE Algorithm in General Cases}
\label{alg:MARE_general}
\begin{algorithmic}[1]
\STATE set $K$ and $\delta$ as intended.
\STATE initialize $P^*=P_x=\check{P}+\delta$, with $\check{P}$ defined in Theorem~\ref{thm:MART_feasible_P}.
\STATE use Algorithm~\ref{alg:k-partition} with $P=P_x$; store the output in $\rho^*$.
\STATE \textbf{while} $P_x\leq\hat{P}$
\STATE \hspace{4mm} use Algorithm~\ref{alg:k-partition} with $P=P_x$; store the output in $\rho_x$.
\STATE \hspace{4mm} \textbf{if} $E(P_x,\rho_x)<E(P^*,\rho^*)$, \textbf{then} $P^*=P_x$, $\rho^*=\rho_x$.
\STATE \hspace{4mm} $P_x=P_x+\delta$.
\STATE \textbf{end while}
\STATE return $P^*$ and $\rho^*$.
\end{algorithmic}
\vspace{-.1cm}
\end{algorithm}
\begin{algorithm}[!t]
\caption{MARE Algorithm in Rayleigh Fading}
\label{alg:MARE_rayleigh}
\begin{algorithmic}[1]
\STATE set $\varphi=\frac{3-\sqrt{5}}{2}$, $K$ and $\delta$ as intended. 
\STATE initialize $P_1=\check{P}+\delta$ and $P_4=\hat{P}$, with $\check{P}$ defined in Theorem~\ref{thm:MART_feasible_P}.
\STATE \textbf{while} $P_4-P_1>\delta$
\STATE \hspace{4mm} $P_2=P_1+\varphi(P_4-P_1)$, $P_3=P_4-\varphi(P_4-P_1)$.
\STATE \hspace{4mm} use Algorithm~\ref{alg:k-partition} with $P=P_2$; store the output in $\rho_1$.
\STATE \hspace{4mm} use Algorithm~\ref{alg:k-partition} with $P=P_3$; store the output in $\rho_2$.
\STATE \hspace{4mm} \textbf{if} $E(P_2,\rho_1)<E(P_3,\rho_2)$ 
\STATE \hspace{8mm} $P_4=P_3$
\STATE \hspace{4mm} \textbf{else} 
\STATE \hspace{8mm} $P_1=P_2$
\STATE \hspace{4mm} \textbf{end if} 
\STATE \textbf{end while}
\STATE set $P^*=\frac{P_1+P_4}{2}$; use Algorithm~\ref{alg:k-partition} with $P=P^*$ and store the output in $\rho^*$.
\STATE return $P^*$ and $\rho^*$.
\end{algorithmic}
\vspace{-.1cm}
\end{algorithm}
Since $\rho\leq 1$, $q_3\leq 1$, $q_4\leq 1$, $\rho a<q_3$ and $(1-\rho)a<q_4$, we notice that $f_3^{'}\geq 0$ and $f_4^{'}\leq 0$. Thus, for a given $P$, $E(P,\rho)$ is the sum of non-decreasing rational functions and non-increasing rational/linear functions with respect to $\rho$. Hence, we expect $E(P,\rho)$ to be either increasing or decreasing or convex or concave function of $\rho$, depending on the effect of its decreasing and increasing terms. Consequently, for a given $P$, we expect $E(P,\rho)$ to have its minimum at one of the boundaries of $\mathcal{A}$ or to have at most a single local minimum point inside $\mathcal{A}$ which can be found using an efficient one-dimensional search, like Golden section method, over $\rho$. We have verified this through function plots with many different parameter settings. However, it cannot be proved analytically. Therefore, for high reliability, the domain of $\rho$ can be partitioned into several sub-domains for each of which, a local optimal point can be found, and the minimum among all of them can be chosen as the near-optimal point. For clarification, this is presented in Algorithm~\ref{alg:k-partition} where $K$ is the intended number of partitions over the domain of $\rho$\footnote{Note that lines 9-12 of Algorithm~\ref{alg:k-partition} implement the iterations of the Golden section method for each sub-domain of $\rho$. Hence, Algorithm~\ref{alg:k-partition} can also be used to solve the MART problem, by setting $K=1$, $P=P^*$, and using $T(P,\rho)$ instead of $E(P,\rho)$.}.

To summarize, Algorithm~\ref{alg:MARE_general} shows the whole procedure for solving the MARE problem, which is a combination of exhaustive search and Golden section method. The exhaustive search is done over the domain of $P$ and at each step, Algorithm~\ref{alg:k-partition} is utilized to obtain the best $\rho$ at the given $P$. Therefore, the computational complexity of Algorithm~\ref{alg:MARE_general} is $O(\frac{K}{\delta}\log(1/\delta))$.

\begin{figure*}[!t]
  \centering{
\begin{subfigure}[t]{\textwidth}
    \centering
    \includegraphics[width=.8\textwidth]{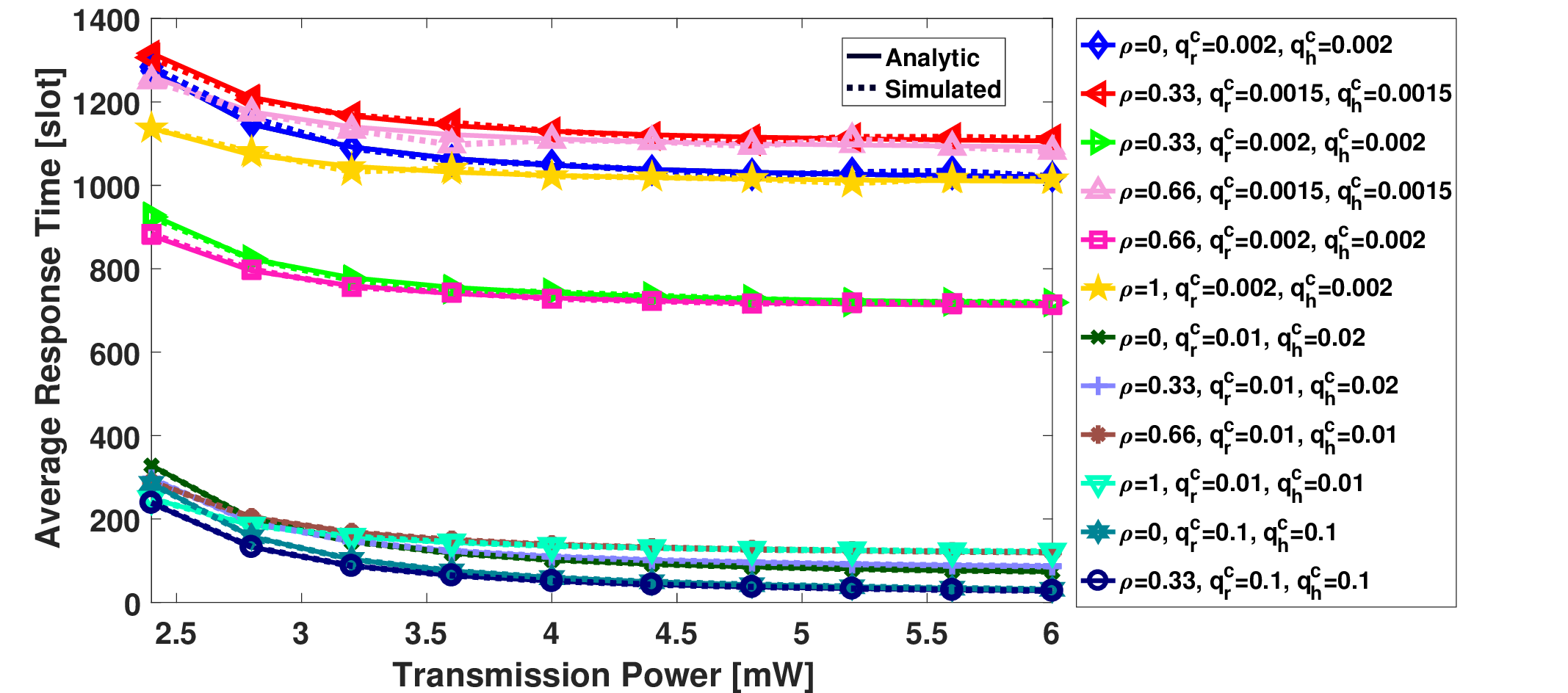}
    \caption{}
    \label{fig:ART_vs_P}
\end{subfigure}\\
\begin{subfigure}[t]{0.48\textwidth}
    \centering
    \includegraphics[width=\textwidth]{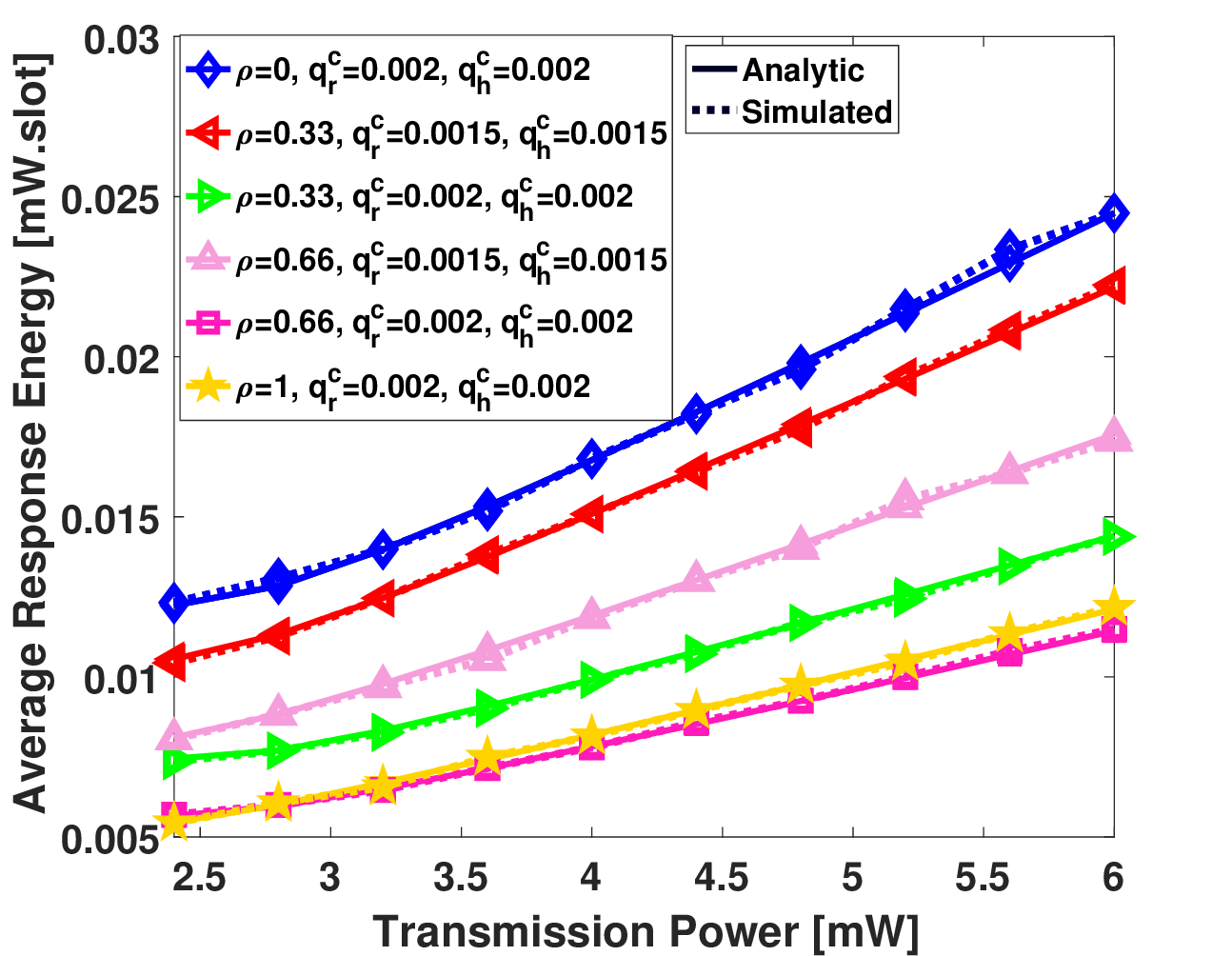}
     \caption{}
    \label{fig:ARE_vs_P1}
\end{subfigure}
\begin{subfigure}[t]{0.49\textwidth}
    \centering
    \includegraphics[width=\textwidth]{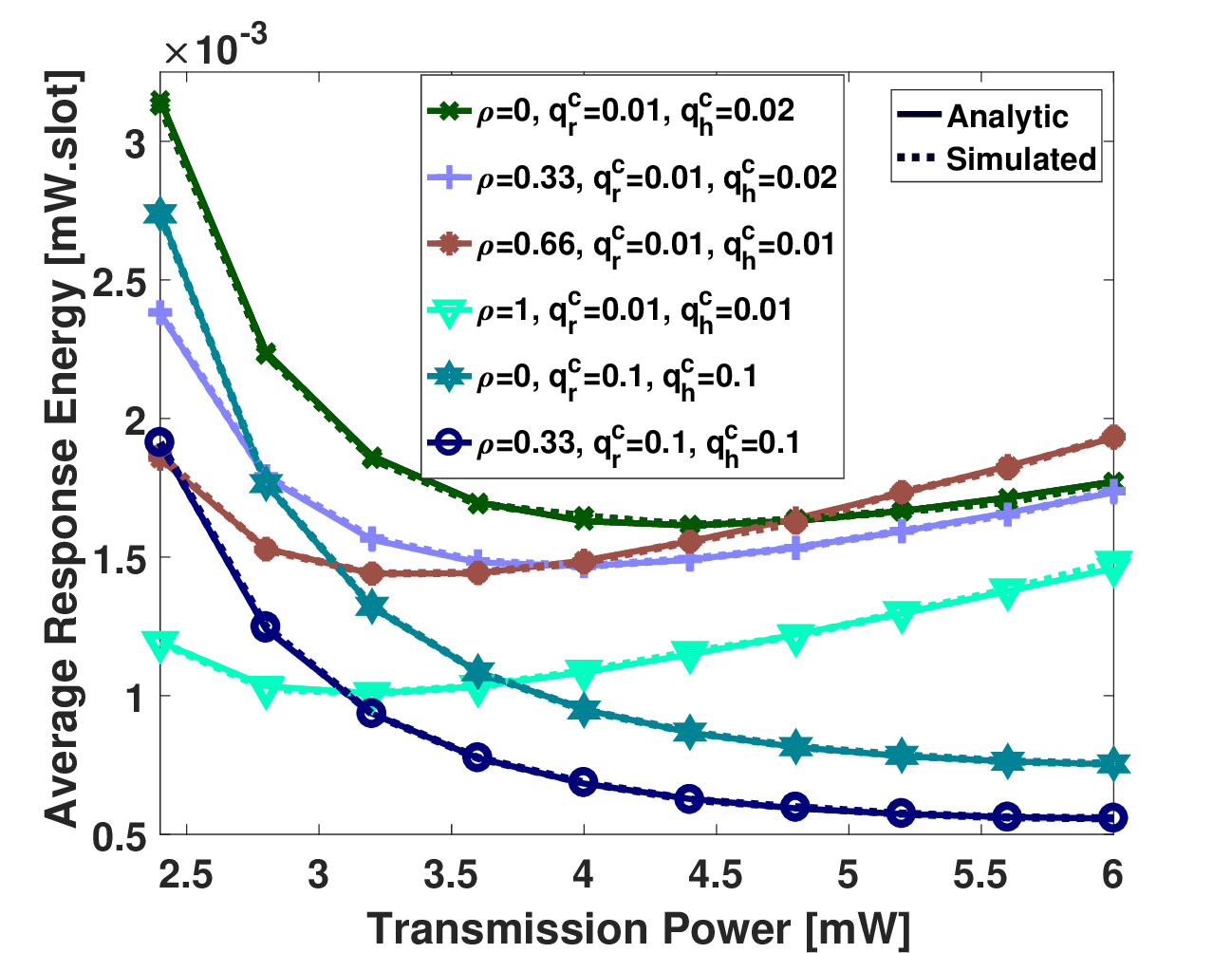}
     \caption{}
    \label{fig:ARE_vs_P2}
\end{subfigure}%
}
\caption{Analytic and simulation results for (a) ART (b, c) ARE, vs. $P$; $a=0.001$.} 
 \label{fig:ART_ARE_vs_P}
 \end{figure*}

\begin{theorem}\label{thm:convexity}
In the case of Rayleigh fading, for a given $\rho$, problem~\eqref{eq:MARE_problem} is a convex optimization problem with respect to $P$.\looseness=-1
\end{theorem}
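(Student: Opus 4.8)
The plan is to show that, for a fixed $\rho$, the objective $E(P,\rho)$ is convex in $P$ over a feasible set that is itself an interval. Since for fixed $\rho$ we have $\bar{P}_s(P,\rho)=2(2-\rho)aP$ with $2(2-\rho)a>0$ constant, \eqref{eq:ARE} shows it suffices to prove that $P\,T(P,\rho)$ is convex in $P$. First I would split $T(P,\rho)$ via \eqref{eq:Tr}--\eqref{eq:T} into the part independent of $P$, namely the computation-buffer terms $W(\rho a,q^c_r)$ and $W((1-\rho)a,q^c_h)$ whose service probabilities are constants, and the part coming from the transmission buffers. Multiplying the $P$-independent part by $P$ gives a term that is linear, hence convex, in $P$, so the whole burden falls on the transmission-buffer terms.

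Collecting those, the $P$-dependent contribution to $T$ is the nonnegative combination $W(a,q^t_{sr})+W(a,q^t_{rs})+(1-\rho)\,W((1-\rho)a,q^t_{rh})+(1-\rho)\,W((1-\rho)a,q^t_{hr})$, where all coefficients are nonnegative because $\rho\in[0,1]$. By \eqref{eq:Wxy} each summand has the form $\frac{1-x}{q^t(P)-x}$ with $q^t(P)=e^{-\Gamma/P}$ in the Rayleigh case, $x$ being the relevant arrival probability and $\Gamma$ the relevant link parameter. Because a nonnegative linear combination of convex functions is convex, it is enough to show that each building block $g(P)=P\cdot\frac{1-x}{e^{-\Gamma/P}-x}$ is convex on the region $e^{-\Gamma/P}>x$ (equivalently $P>-\Gamma/\ln x$), which is exactly the range secured by the stability constraints $C1$--$C4$.

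The heart of the argument, and the step I expect to be the main obstacle, is verifying $g''(P)\ge 0$. Writing $u=e^{-\Gamma/P}$ and $D=u-x>0$, one has $u'=u\Gamma/P^2$ and $u''=u\Gamma(\Gamma-2P)/P^4$; differentiating $g=(1-x)P/D$ twice produces a fraction with positive denominator $D^3$, so the sign is governed by the numerator $-P D D''-2D D'+2P(D')^2$. This expression looks forbidding, but after substituting $u'$ and $u''$ and factoring out the positive quantity $u\Gamma/P^3$, the bracket that remains telescopes: the two terms in $P$ cancel and it collapses to $\Gamma(u+x)$. Hence the numerator equals $(1-x)\,u\Gamma^2(u+x)/P^3$, which is strictly positive since $\Gamma>0$, $u>0$, $x>0$, and $1-x>0$; therefore $g''(P)>0$ and each block is convex.

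Finally I would assemble the pieces: $P\,T(P,\rho)$ is a sum of a linear term and nonnegative multiples of the convex blocks $g(P)$, hence convex, so $E(P,\rho)$ is convex in $P$. For the feasible set, note that each $P$-dependent constraint $C1$--$C4$ has the form $x<e^{-\Gamma/P}$; by monotonicity of $e^{-\Gamma/P}$ in $P$ this is equivalent to a linear lower bound $P>-\Gamma/\ln x$, while $C7$ (see \eqref{eq:pmax_con}) supplies the upper bound $P\le\hat{P}$, so the admissible set of $P$ is an interval and thus convex. A convex objective over a convex feasible set is a convex optimization problem, which completes the proof.
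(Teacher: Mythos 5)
Your proposal is correct and follows essentially the same route as the paper: both reduce the objective to a sum of terms that are linear in $P$ (the computation-buffer contributions) plus convex blocks of the form $cP\frac{1-x}{e^{-\Gamma/P}-x}$, verify convexity of each block by computing the second derivative (your expression $(1-x)\Gamma^2 e^{-\Gamma/P}(e^{-\Gamma/P}+x)/\bigl(P^3(e^{-\Gamma/P}-x)^3\bigr)$ matches the paper's~\eqref{eq:f_6_second_der}), and observe that constraints $C1$--$C4$ become linear bounds on $P$ after taking logarithms. No substantive differences.
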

\begin{proof}
Let substitute the channel service probabilities in the constraints~\eqref{eq:t_sr_stability_con}-\eqref{eq:t_hr_stability_con} with their equations in terms of $P$ in the case of Rayleigh fading. Then, it is observed that the constraints of the problem have the nonlinear form $c_1a<e^{\frac{-\Gamma}{P}}$, where $c_1$ is either 1 or $(1-\rho)$, and $\Gamma$ is one of $\Gamma_{sr}$, $\Gamma_{rh}$, $\Gamma_{hr}$, and $\Gamma_{sr}$ which are positive constants. By taking the logarithm of the two sides of the constraints, they can be transformed into the equivalent linear forms $P<-\frac{\Gamma}{\ln c_1a}$. Thus, it suffices to show that the objective function $E(P,\rho)$ is convex with respect to $P$. According to~\eqref{eq:Wxy}-\eqref{eq:T} and~\eqref{eq:P_s}-\eqref{eq:ARE}, $E(P,\rho)$ is the sum of the linear functions $(2-\rho)\rho aW(\rho a,q^c_r)P$ and $(2-\rho)(1-\rho)aW((1-\rho)a,q^c_h)P$, and the nonlinear functions having the form 
\vspace*{-.2cm}
\begin{equation}\label{eq:f_6}
f_6=(2-\rho)c_1aP\frac{1-c_1a}{e^{\frac{-\Gamma}{P}}-c_1a}. 
\vspace*{-.2cm}
\end{equation}
The second derivative of $f_6$ with respect to $P$ is 
\begin{equation}\label{eq:f_6_second_der}
f_6^{''}=(2-\rho)c_1a(1-c_1a)\frac{\Gamma^2e^{-\frac{\Gamma}{P}}(e^{-\frac{\Gamma}{P}}+c_1a)}{P^3(e^{-\frac{\Gamma}{P}}-c_1a)^3},
\end{equation}
Based on the constraints, we have $c_1a<e^{-\frac{\Gamma}{P}}$ for feasible solutions. Hence, $f_6^{''}>0$ holds, which indicates that $f_6$ and consequently, the objective function~\eqref{eq:MARE_objective} are convex functions of $P$.
\end{proof}

Based on the aforementioned, we propose Algorithm~\ref{alg:MARE_rayleigh} to solve problem~\eqref{eq:MARE_problem} in the case of Rayleigh fading. It searches over the domain of $P$ using the Golden section method and at each $P$, it uses Algorithm~\ref{alg:k-partition} to obtain a near-optimal $\rho$. Therefore, it has the computational complexity of $O(K(\log(1/\delta))^2)$. It is clear that at each iteration, $P_1$ and $P_4$ get closer to each other and therefore, the algorithm will terminate finally. As the results in the next section indicate, the proposed algorithm is highly effective and its performance is close to optimum. 

\vspace*{-.2cm}
\section{Numerical Results and Discussions}\label{sec:results}

In this section, we verify the presented analysis and also evaluate the performance of the proposed schemes. We consider Rayleigh fading channel on all the links, with $\Gamma_{sr}=\Gamma_{rs}=\Gamma_{rh}=\Gamma_{hr}=0.01$ corresponding to $\gamma=10$, $N_0=10^{-3}$ and $\overline{g}_{sr}=\overline{g}_{rs}=\overline{g}_{rh}=\overline{g}_{hr}=1$. The reason for considering similar qualities for all the links is to have equal channel service probabilities on all the links; this enables us to accurately investigate the effect of other deciding factors on the system performance and observe the result of interplay between transmission and computation queues. Hence, we consider several settings for other parameters, as presented in the sequel. 

\begin{figure*}[!t]
  \centering{
\begin{subfigure}[t]{0.49\textwidth}
    \centering		
    \includegraphics[width=\textwidth]{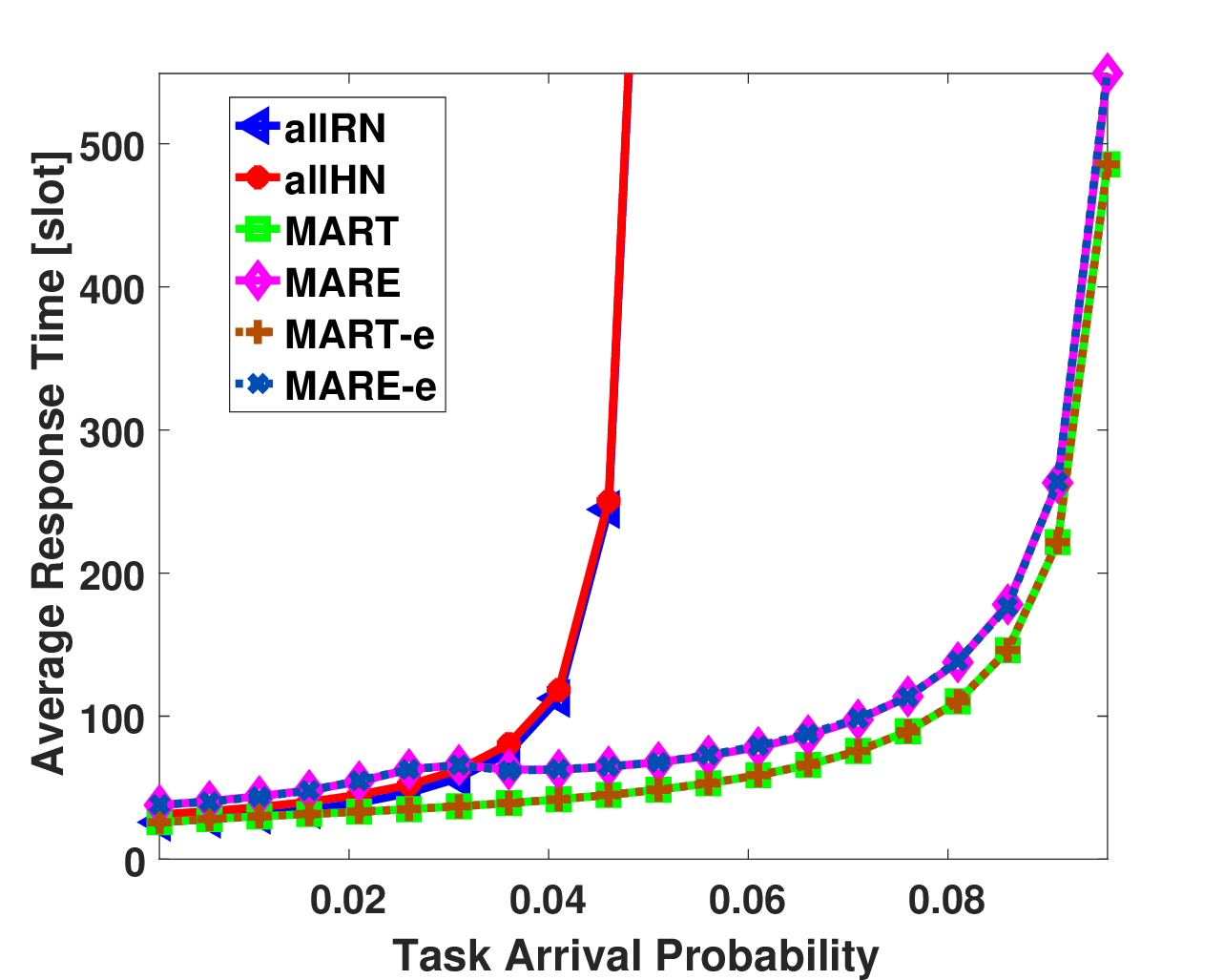}
    \caption{}
    \label{fig:ART_vs_a_both_q_05}
\end{subfigure}%
\begin{subfigure}[t]{0.5\textwidth}
    \centering
		\includegraphics[width=\textwidth]{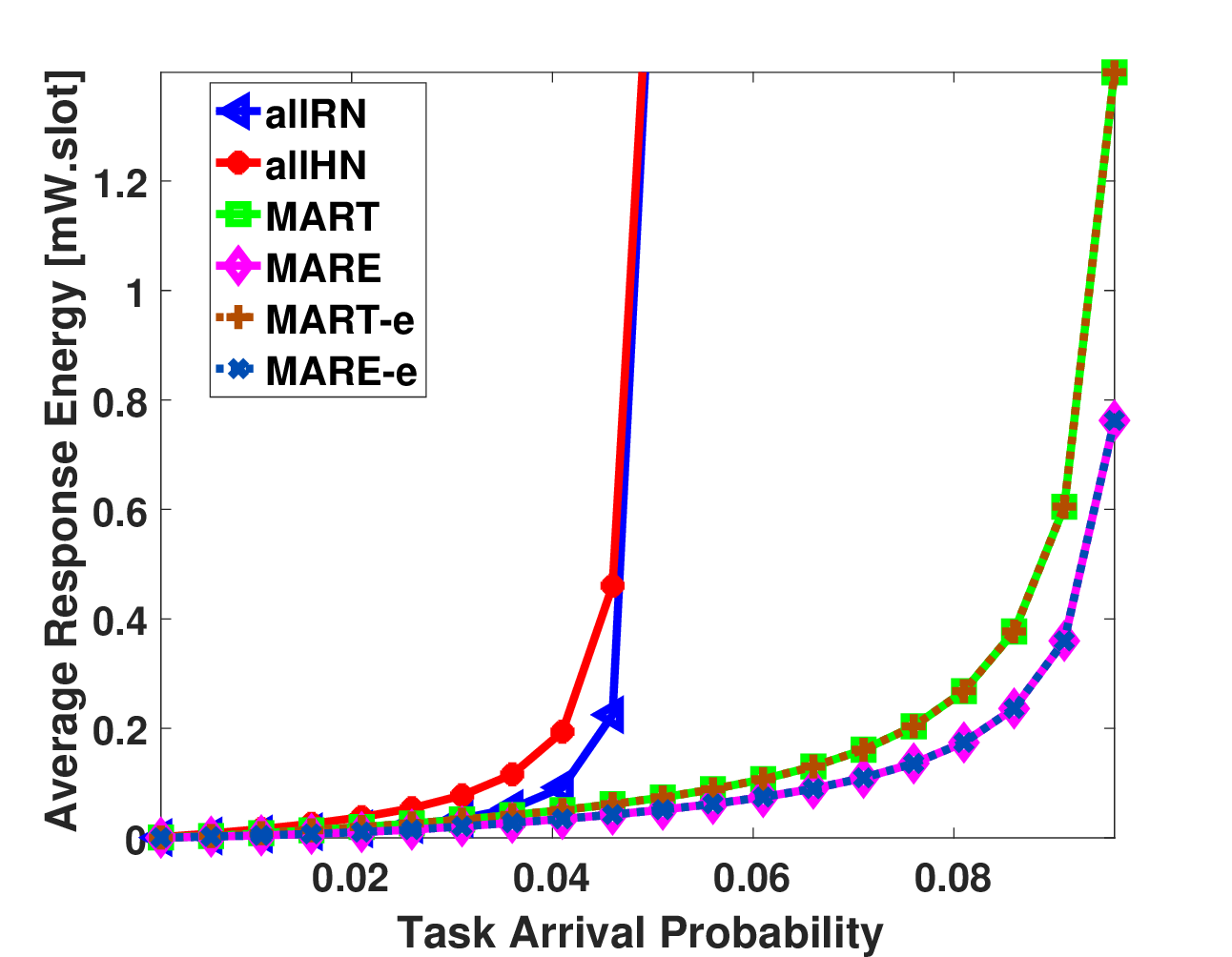}
     \caption{}
    \label{fig:ARE_vs_a_both_q_05}
\end{subfigure}%
 }
\caption{Effect of task arrival probability $a$ on the (a) ART (b) ARE; $q^c_r=0.05, q^c_h=0.05$.} 
 \label{fig:ART_ARE_vs_a_both_q_05}
 \end{figure*}

First, we investigate the validity of the analytic results using Matlab simulations for $10^9$ time slots. Fig.~\ref{fig:ART_ARE_vs_P} displays the ART and the ARE with respect to the transmit power in a system with the task arrival probability $a=0.001$ and different values for $\rho$, $q^c_r$, and $q^c_h$. These settings have been chosen to illustrate different possible patterns of the results, in particular with the MARE scheme. We observe that the simulation results are in accordance with the analytical results. Besides, Fig.~\ref{fig:ART_vs_P} shows that the ART decreases as $P$ increases; the RoD in the ART is large at low values of $P$ but small at high values of $P$. This is in agreement with the analysis presented in the previous section. Also, as expected, the higher the $q^c_r$ and/or $q^c_h$, the lower the ART. Note that at low values of $q^c_r$ and $q^c_h$, the ART in the cases with $\rho=0.33$ and $\rho=0.66$ is considerably lower than that in the cases with $\rho=0$ and $\rho=1$. However, at higher values of $q^c_r$ and $q^c_h$, either there is a slight difference, or the ART in the cases with $\rho=0$ or $\rho=1$ is lower. These indicate that distributing the tasks between the HN and RN helps to reduce the ART, when they have lower probabilities of processing the tasks of the SN; but distributing the tasks may not change or may even increase the ART when the probability of task processing at the RN and/or the HN is high.\looseness=-1

On the other hand, the ARE has different trends. In order to illustrate them clearly and distinguish different charts, we have shown the results for ARE in Fig.~\ref{fig:ARE_vs_P1} and Fig.~\ref{fig:ARE_vs_P2}. It is observed that as $P$ increases, the ARE may increase, decrease or it may decrease first and then increase. This is because, depending on the system settings and due to the effect of several buffers, the RoI in the SAP and RoD in the ART may be different as $P$ increases. Specifically, at low values of $q^c_r$ and $q^c_h$ shown in Fig.~\ref{fig:ARE_vs_P1}, the queues in the computation buffers are large and the ART is so high that increasing the transmit power does not lead to as much RoD in the ART as the RoI in the SAP; therefore, the ARE is minimum at the lowest $P$. At moderate values of $q^c_r$ and $q^c_h$ shown in Fig.~\ref{fig:ARE_vs_P2}, the size of the queues in the computation and transmission buffers are close to each other; hence, as $P$ increases, the RoI in the SAP is comparable with the RoD in the ART and consequently, the minimum ARE happens at a $P$ between the minimum and maximum. At high values of $q^c_r$ and $q^c_h$, the queues of tasks in the computation buffers are very small and the transmission queues have more influence on the ART; as $P$ increases, the RoD in the ART is more than the RoI in the SAP and therefore, the ARE is minimum at the highest $P$.

\begin{figure*}[!t]
  \centering{
\begin{subfigure}[t]{0.5\textwidth}
    \centering
    \includegraphics[width=\textwidth]{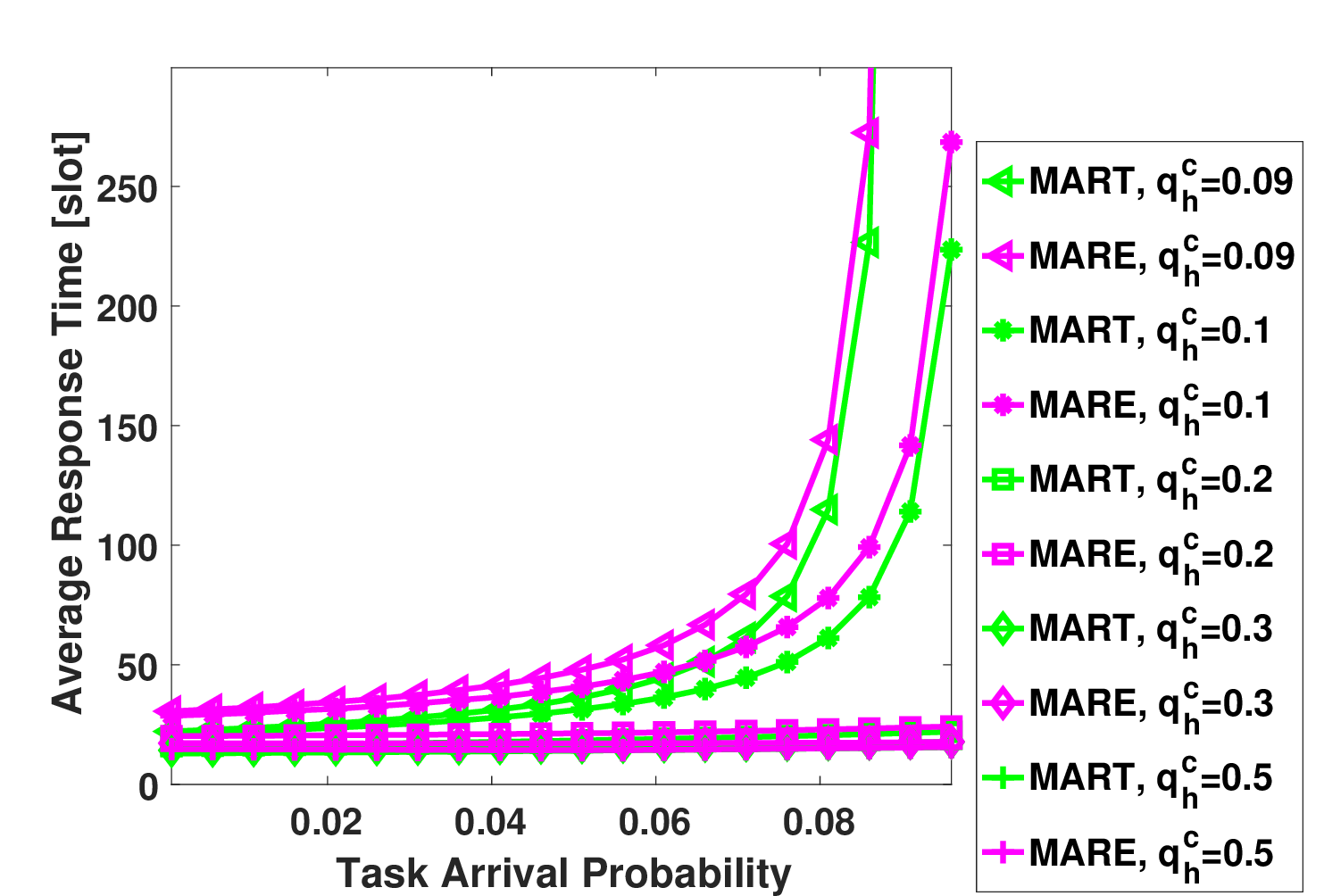}
    \caption{}
    \label{fig:ART_vs_a_qr_001}
\end{subfigure}%
\begin{subfigure}[t]{0.38\textwidth}
    \centering
    \includegraphics[width=\textwidth]{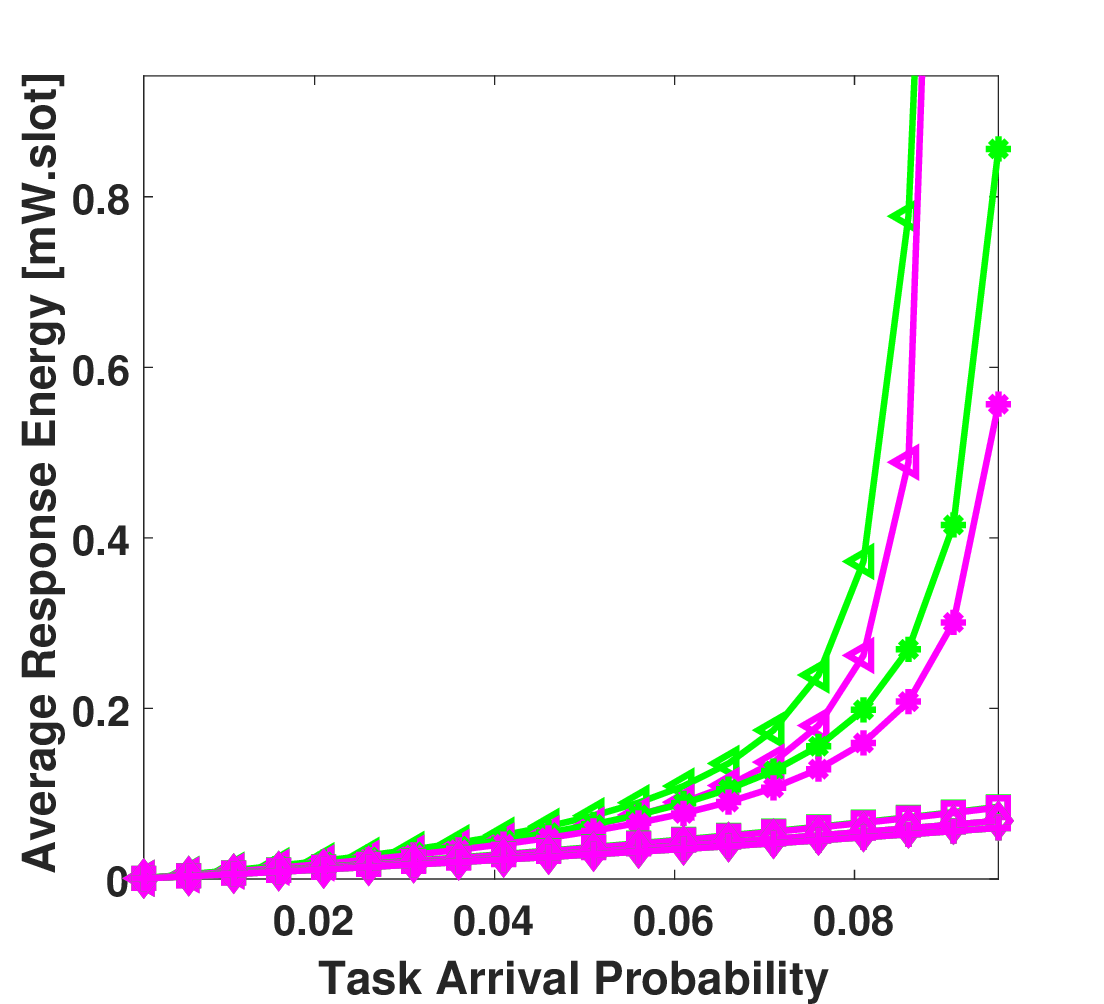}
     \caption{}
    \label{fig:ARE_vs_a_qr_001}
\end{subfigure}\\
\begin{subfigure}[t]{0.33\textwidth}
    \centering
    \includegraphics[width=\textwidth]{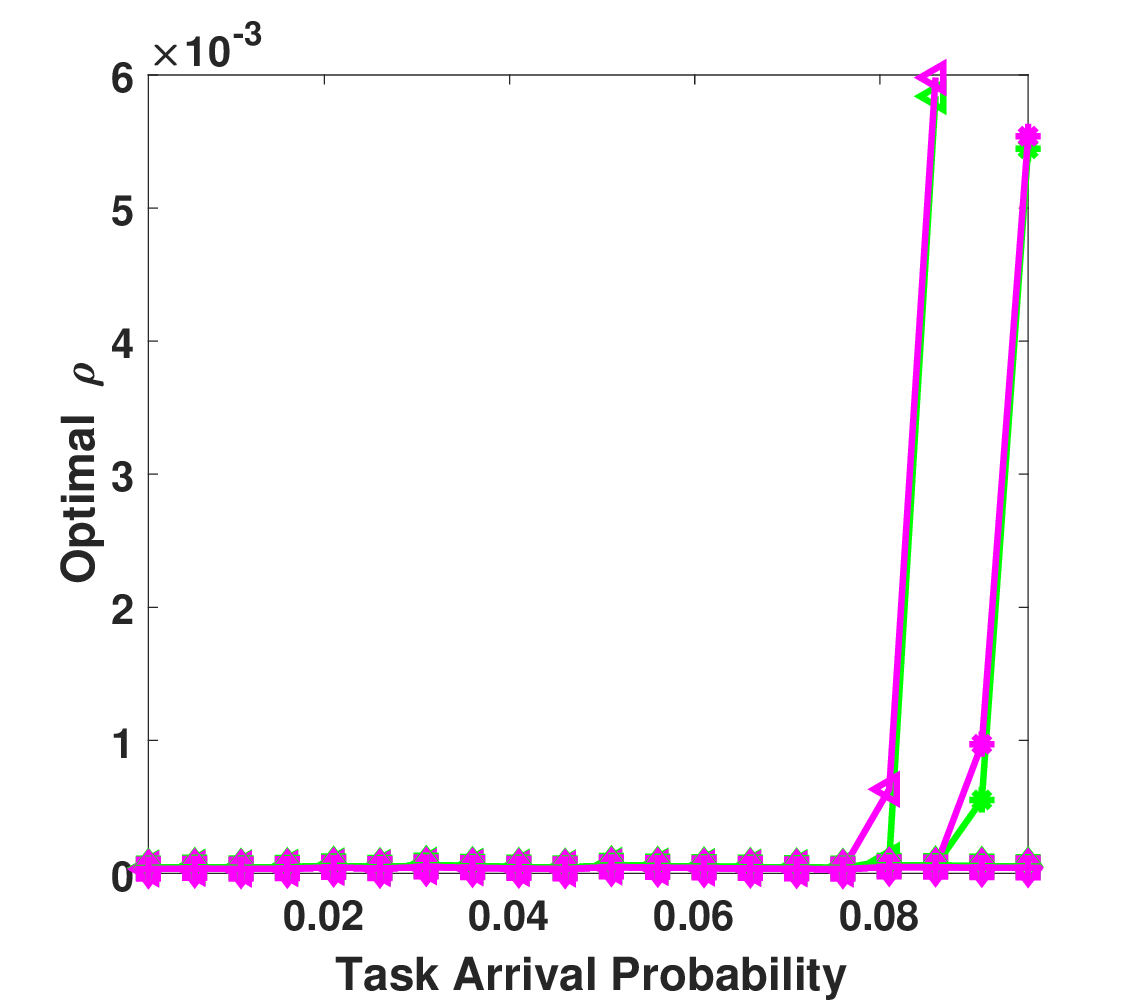}
     \caption{}
    \label{fig:rho_vs_a_qr_001}
\end{subfigure}%
\begin{subfigure}[t]{.33\textwidth}
    \centering
    \includegraphics[width=\textwidth]{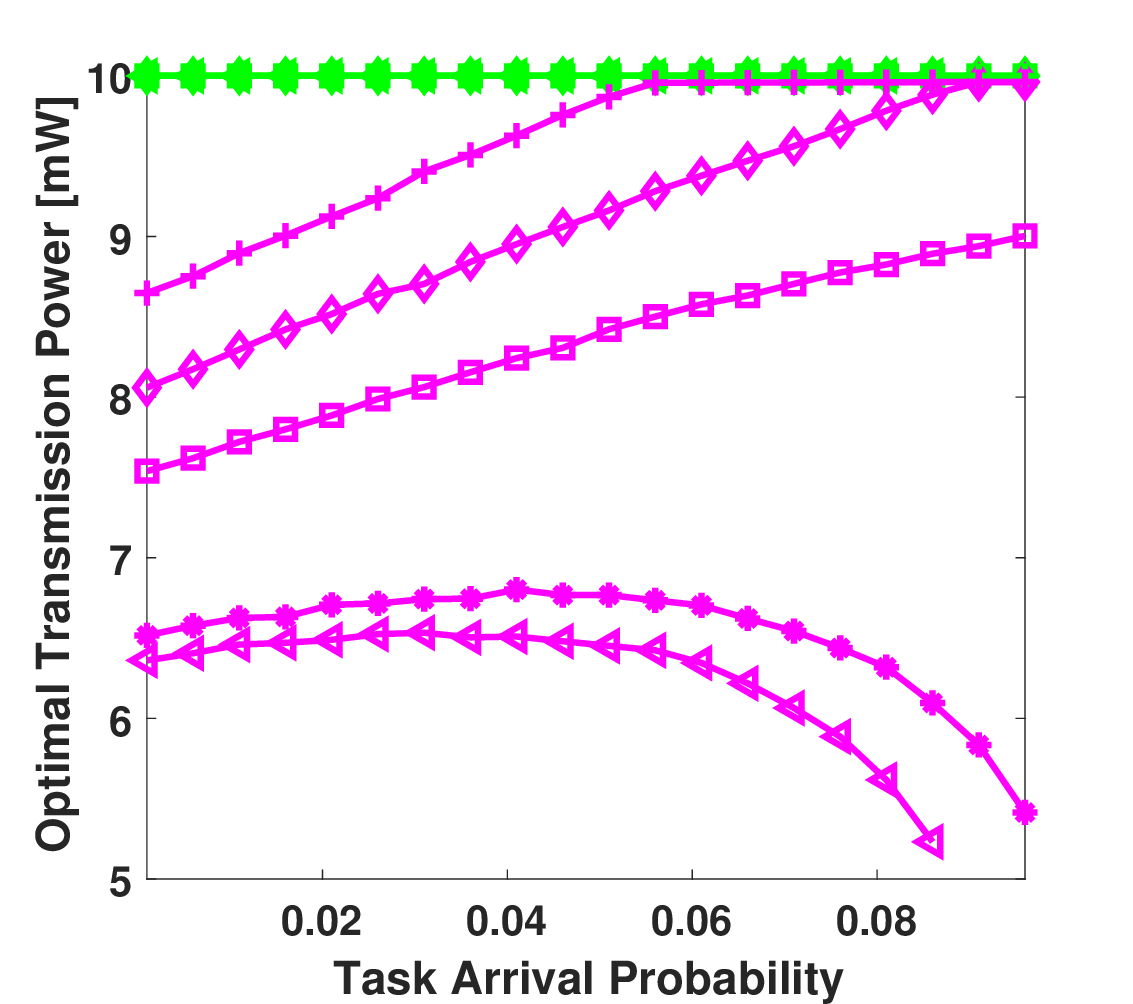}
     \caption{}
    \label{fig:P_vs_a_qr_001}
\end{subfigure}%
\begin{subfigure}[t]{.33\textwidth}
    \centering
    \includegraphics[width=\textwidth]{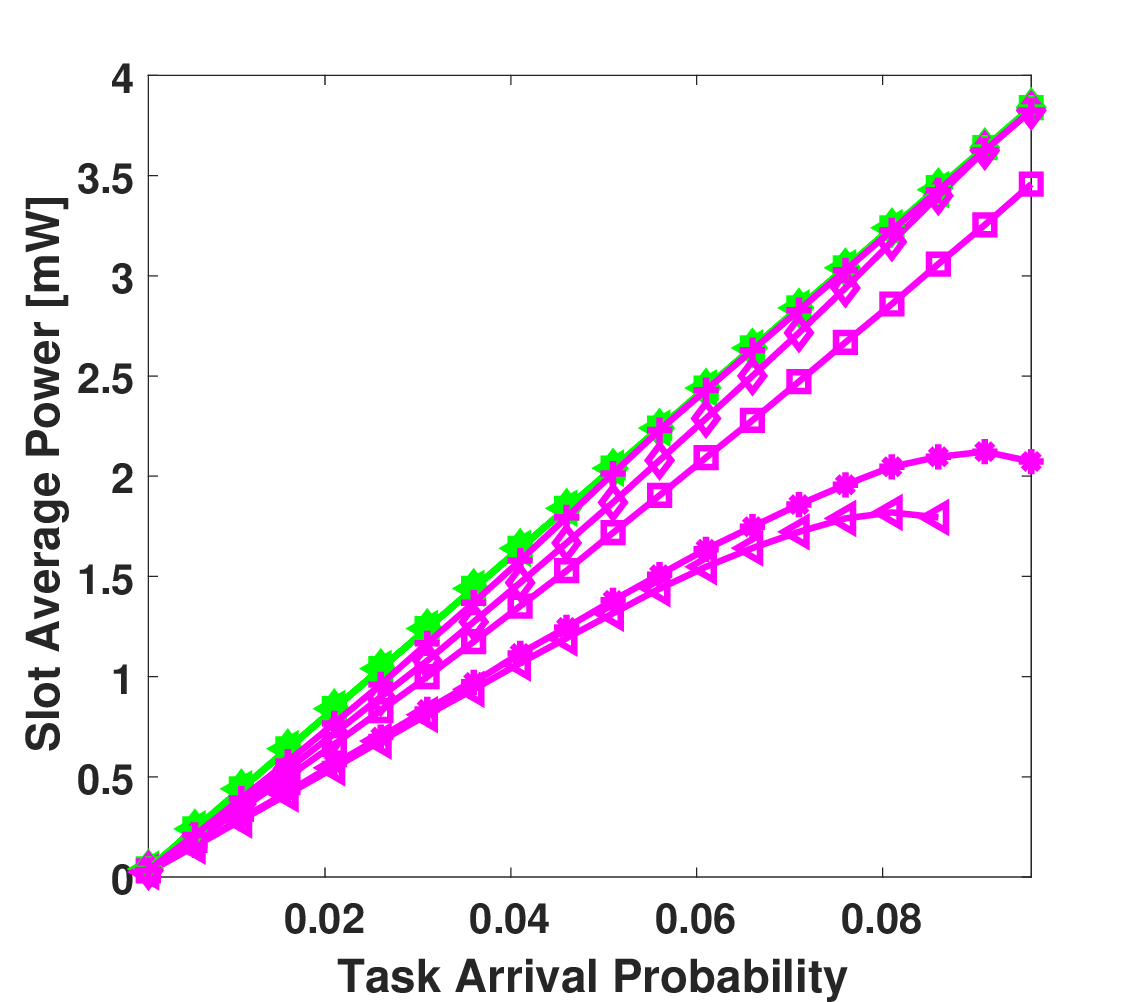}
     \caption{}
    \label{fig:SAP_vs_a_qr_001}
\end{subfigure}%
 }
\caption{Effect of task arrival probability $a$ on the (a) ART (b) ARE (c) $\rho^*$ (d) $P^*$ (e) SAP; $q^c_r=0.001$.} 
 \label{fig:all_vs_a_qr_001}
 \end{figure*}

Next, we investigate the performance of the proposed schemes in comparison with the allRN and allHN baseline schemes, which refer to the methods where all the tasks are assigned respectively to the RN and to the HN. In both of allRN and allHN, the maximum power is used for transmissions. 
Note that since our paper is an early work on buffer-and-server-aided relay-assisted MEC, there are not other similar methods to be used for comparison. Moreover, the allRN and allHN schemes are based on existing methods and are indeed more suitable to show the effect of the proposed schemes because allRN resembles a single-hop time-slotted MEC system and allHN corresponds to the case of a relay-enhanced MEC architecture where the relay station is equipped with transmission buffers but does not have computation capability. For the proposed algorithms, we have considered $\delta=0.0001$ and $K=1$ and have also compared them with the exhaustive search methods MART-e and MARE-e. In the MART-e and MARE-e methods respectively, the MART and MARE problems are solved by exhaustive search over the domains of $P$ and $\rho$ in steps of $\delta$. The task arrival probability $a$ is considered from $0.001$ to $0.096$ in steps of $0.005$ and the maximum transmit power of the nodes is $10$ mW. Note that if the maximum power is used for transmissions, channel service probabilities will be approximately equal to 0.368, which is high above the range of $a$ and will prevent large queuing at the transmission buffers. Hence, in order to guarantee the system stability with the allRN and allHN schemes at least at the beginning (low values of $a$), we consider both $q^c_r$ and $q^c_h$ equal to $0.05$. As shown in Fig.~\ref{fig:ART_ARE_vs_a_both_q_05}, at all values of $a$, the MART scheme has the lowest ART whereas the MARE scheme has the lowest ARE. This also holds in the results presented later and indicates that the proposed methods are highly effective in reaching their goals to minimize the ART or the ARE. The allRN and allHN schemes have a small ART and ARE when $a$ is lower than $0.02$. However, as $a$ increases over $0.02$, the ART and ARE with allRN and allHN increase rapidly and then go towards infinity at around $a=0.045$, whereas this happens for MART and MARE at around $a=0.096$. This indicates that both MART and MARE optimize the use of transmit power and the processing capabilities at the RN and HN, and result in considerably better performance compared with allRN and allHN. Moreover, it is observed that, even with $K=1$, the proposed algorithms achieve almost the same performance as the exhaustive search methods. This has been tested for the scenarios in the sequel, too, and confirms the discussions presented in Remark 1 in the previous section
\footnote{In fact, we conducted many other simulations where, for higher reliability, $K$ was set to values much larger than 1 but the results did not change.}. 
Hence, to have clear figures in the following, we do not illustrate the results of exhaustive search, and only provide the results obtained by the proposed algorithms and focus on the possible trends.

\begin{figure*}[!t]
  \centering{
\begin{subfigure}[t]{0.5\textwidth}
    \centering
    \includegraphics[width=\textwidth]{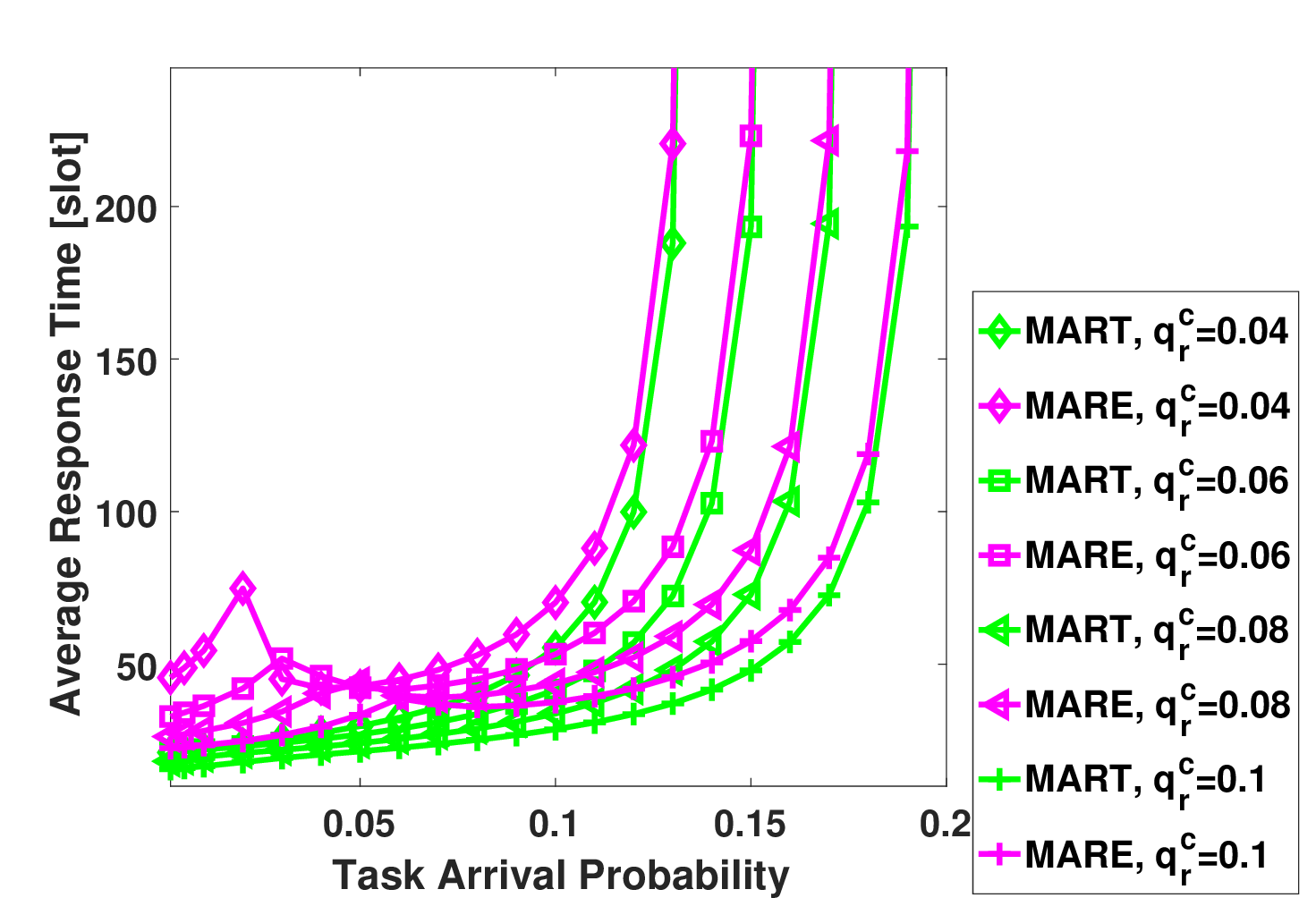}
    \caption{}
    \label{fig:ART_vs_a_qh_1}
\end{subfigure}%
\begin{subfigure}[t]{0.38\textwidth}
    \centering
    \includegraphics[width=\textwidth]{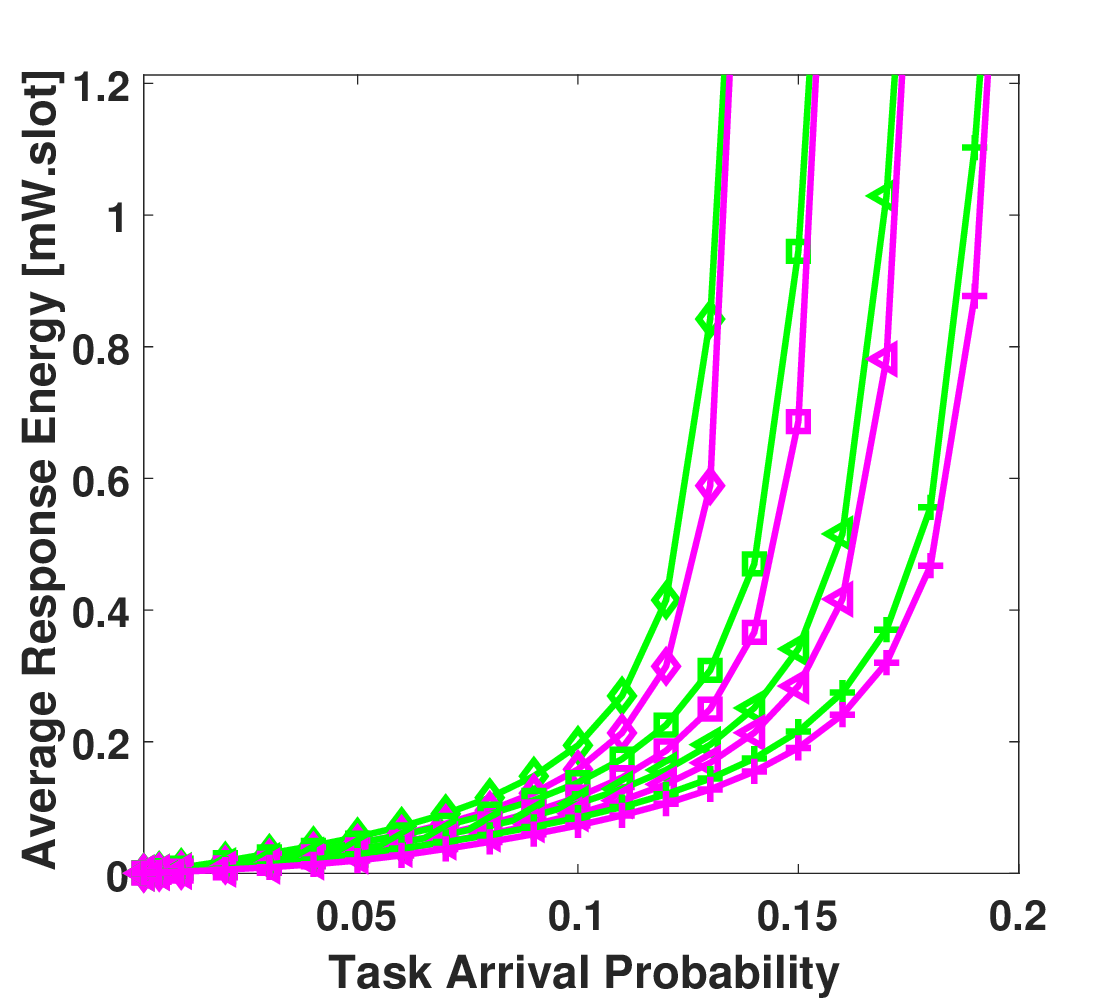}
     \caption{}
    \label{fig:ARE_vs_a_qh_1}
\end{subfigure}\\
\begin{subfigure}[t]{0.33\textwidth}
    \centering
    \includegraphics[width=\textwidth]{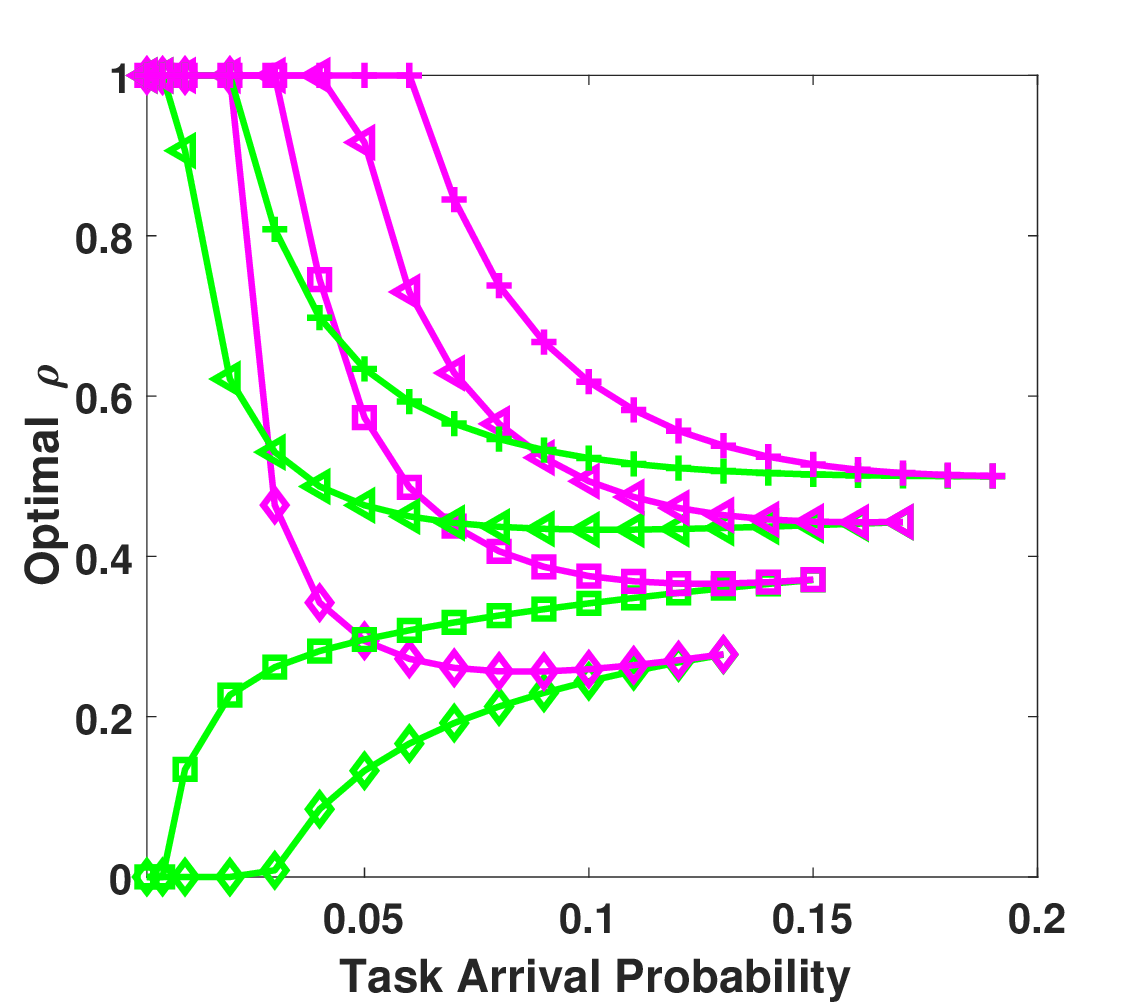}
     \caption{}
    \label{fig:rho_vs_a_qh_1}
\end{subfigure}%
\begin{subfigure}[t]{0.33\textwidth}
    \centering
    \includegraphics[width=\textwidth]{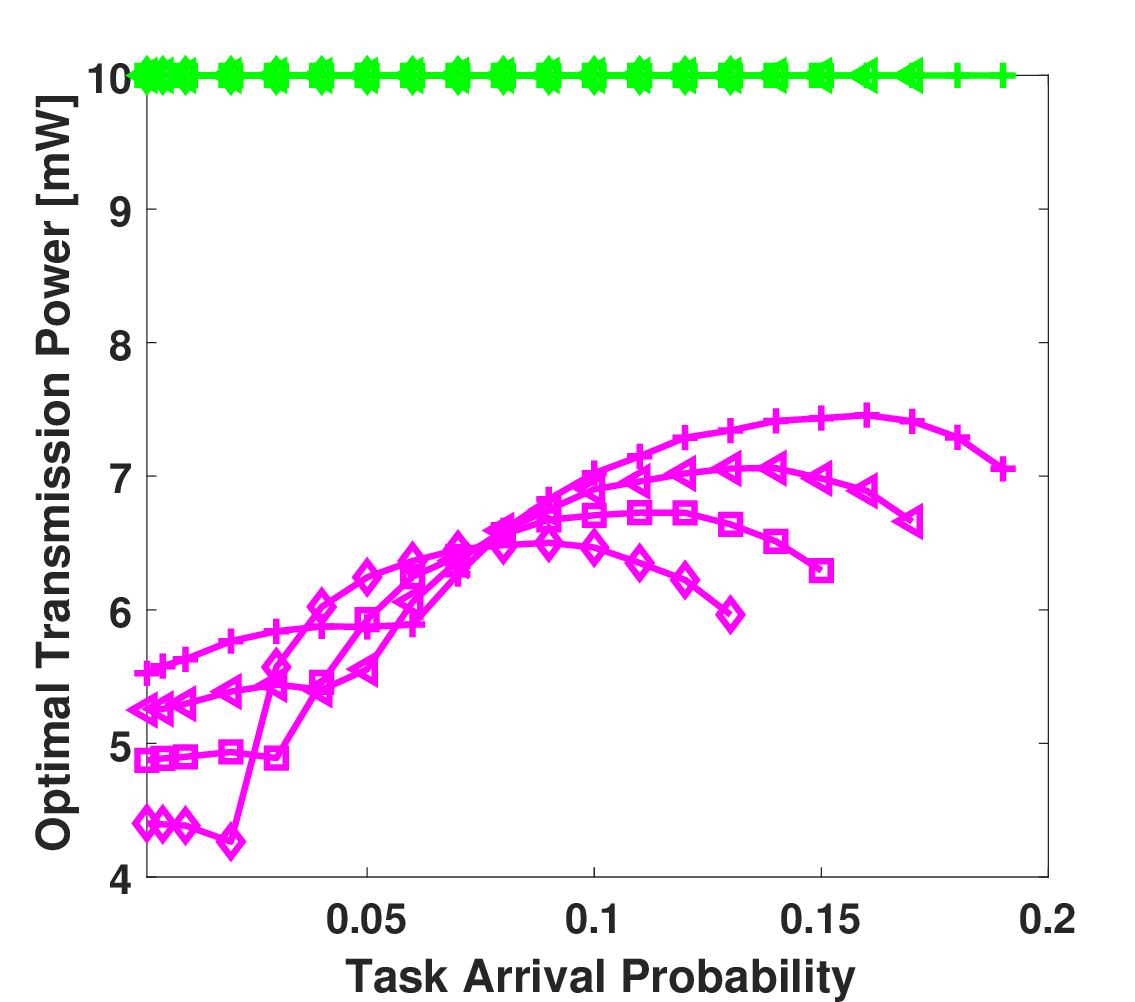}
     \caption{}
    \label{fig:P_vs_a_qh_1}
\end{subfigure}%
\begin{subfigure}[t]{0.33\textwidth}
    \centering
    \includegraphics[width=\textwidth]{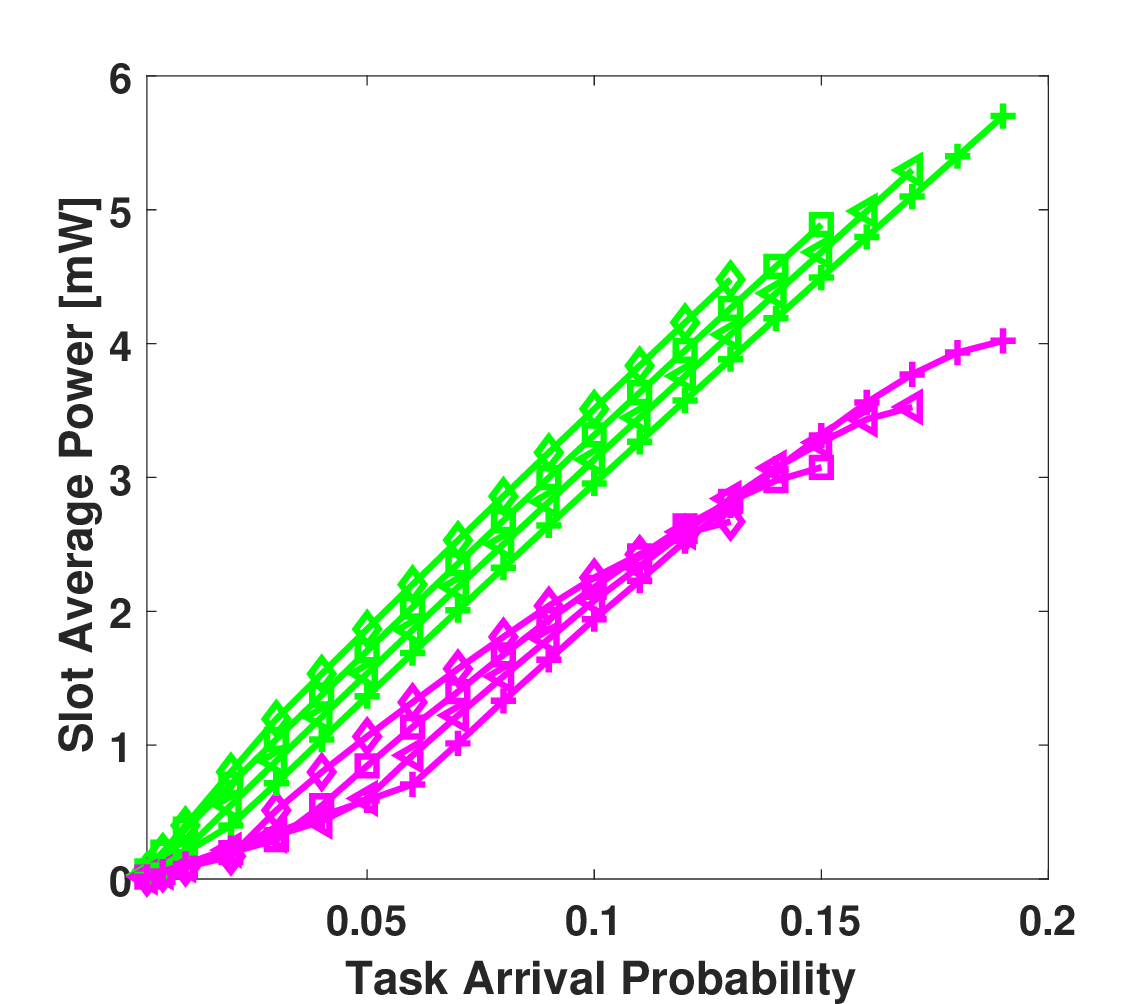}
     \caption{}
    \label{fig:SAP_vs_a_qh_1}
\end{subfigure}%
 }
\caption{Effect of task arrival probability $a$ on the (a) ART (b) ARE (c) $\rho^*$ (d) $P^*$ (e) SAP; $q^c_h=0.1$.} 
 \label{fig:all_vs_a_qh_1}
 \end{figure*}

In order to investigate the performance of the proposed methods in various system settings and provide insights on different trends, we first consider a case with low processing ability at the RN. We fix $q^c_r$ at $0.001$ and examine the impact of the arrival probability of tasks in the scenarios with different values of $q^c_h$. Fig.~\ref{fig:all_vs_a_qr_001} shows the results, where due to the space shortage, all the sub-figures share the same legend. It is clear in Figs.~\ref{fig:ART_vs_a_qr_001} and~\ref{fig:ARE_vs_a_qr_001} that as $a$ goes over $0.06$, the ART and ARE increase rapidly in the cases of low $q^c_h$, i.e., $0.09$ and $0.1$. Particularly, in the case of $q^c_h=0.09$, the MART and MARE problems are not feasible when $a$ goes beyond $0.9$, which is why the results are not illustrated at $a=0.91$ and $a=0.96$. We observe in Fig.~\ref{fig:rho_vs_a_qr_001} that in almost all the considered scenarios, $\rho^*$ is $0$, which means all the tasks are offloaded to the HN; it is only at high $a$ and lower values of $q^c_h$ that $\rho^*$ is a little higher than $0$. This trend is due to the fact that when the computation capability is low at the RN, using its computation server will cause large queuing at the RN computation buffer and consequently, high ART in the system. On the other hand, the computation capability is high at the HN and offloading all the tasks to it leads to a lower ART, except for the cases with high $a$ and lower values of $q^c_h$ which necessitate having few tasks processed at the RN. Fig.~\ref{fig:P_vs_a_qr_001} shows that while with the MART scheme, the maximum power is used for transmissions at the nodes, the MARE scheme achieves comparable ART using different power values. Specifically, at lower values of $q^c_h$, when $a$ increases from zero to relatively moderate values, the MARE scheme increases $P^*$ as its impact is considerable on decreasing the ART. However, when $a$ increases more, increasing the transmit power does not help much in decreasing the ART; thus, MARE reduces $P^*$ to save energy, which is reflected on the SAP in Fig.~\ref{fig:SAP_vs_a_qr_001}. On the other hand, at high values of $q^c_h$, as $a$ increases, MARE uses higher transmit power (even up to the maximum, if needed) because its impact on the RoD in the ART is considerable compared with the RoI in the SAP. This is why we also observe that the more the $q^c_h$, the higher the $P^*$. Fig.~\ref{fig:SAP_vs_a_qr_001} illustrates that the SAP of both MART and MARE increases with task load, which is expected because the more the $a$, the more the task transmissions on the links and the more the power consumption. Moreover, based on the stated discussions earlier, lower SAP is expected for MARE which is confirmed by the results. Since the energy consumption in the system is directly proportional to the SAP and the number of slots the system runs, it can be inferred that as time passes, the MARE scheme can result in remarkable energy saving in the system.

\begin{figure*}[!t]
  \centering{
\begin{subfigure}[t]{0.5\textwidth}
    \centering
    \includegraphics[width=\textwidth]{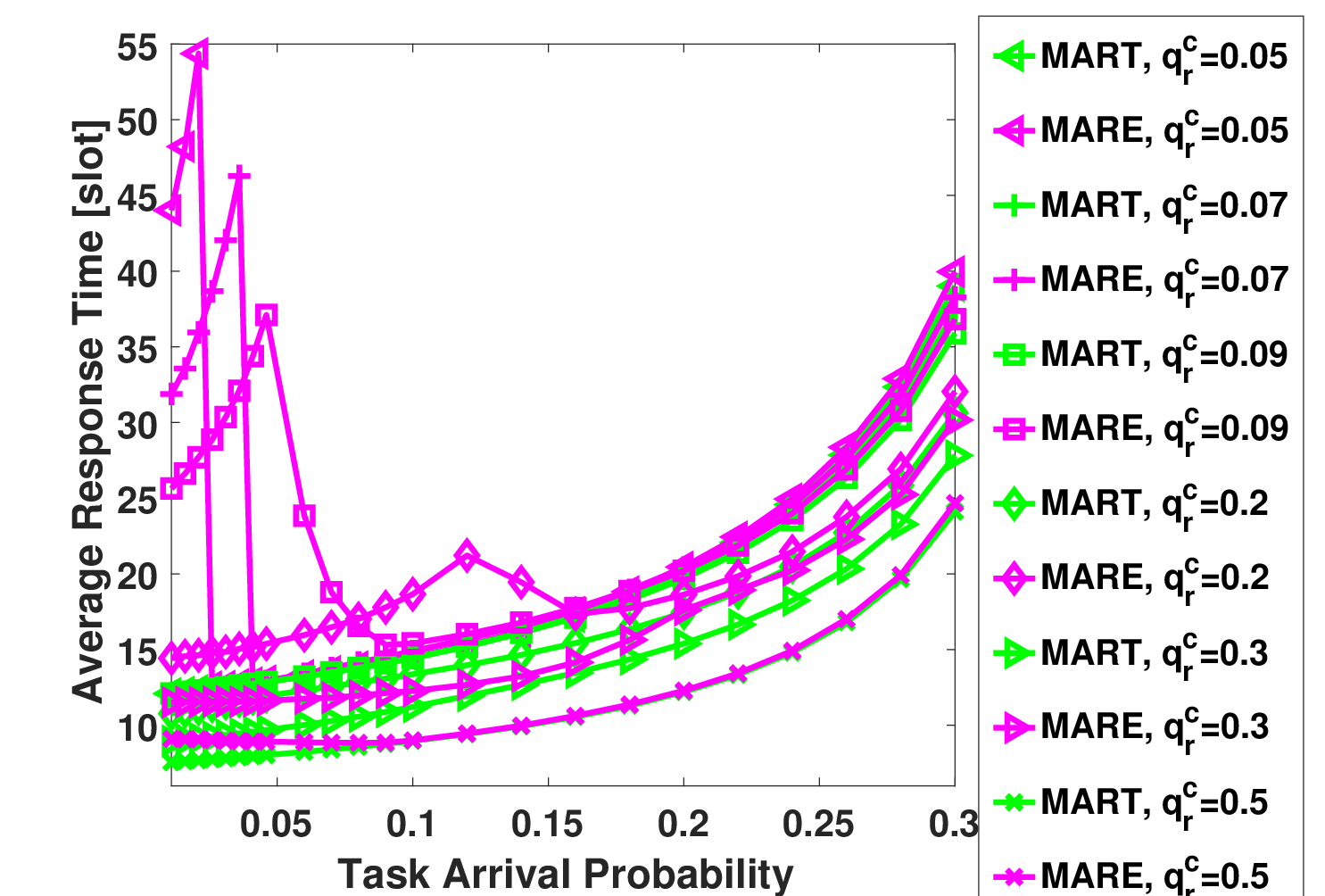}
    \caption{}
    \label{fig:ART_vs_a_qh_99}
\end{subfigure}%
\begin{subfigure}[t]{0.38\textwidth}
    \centering
    \includegraphics[width=\textwidth]{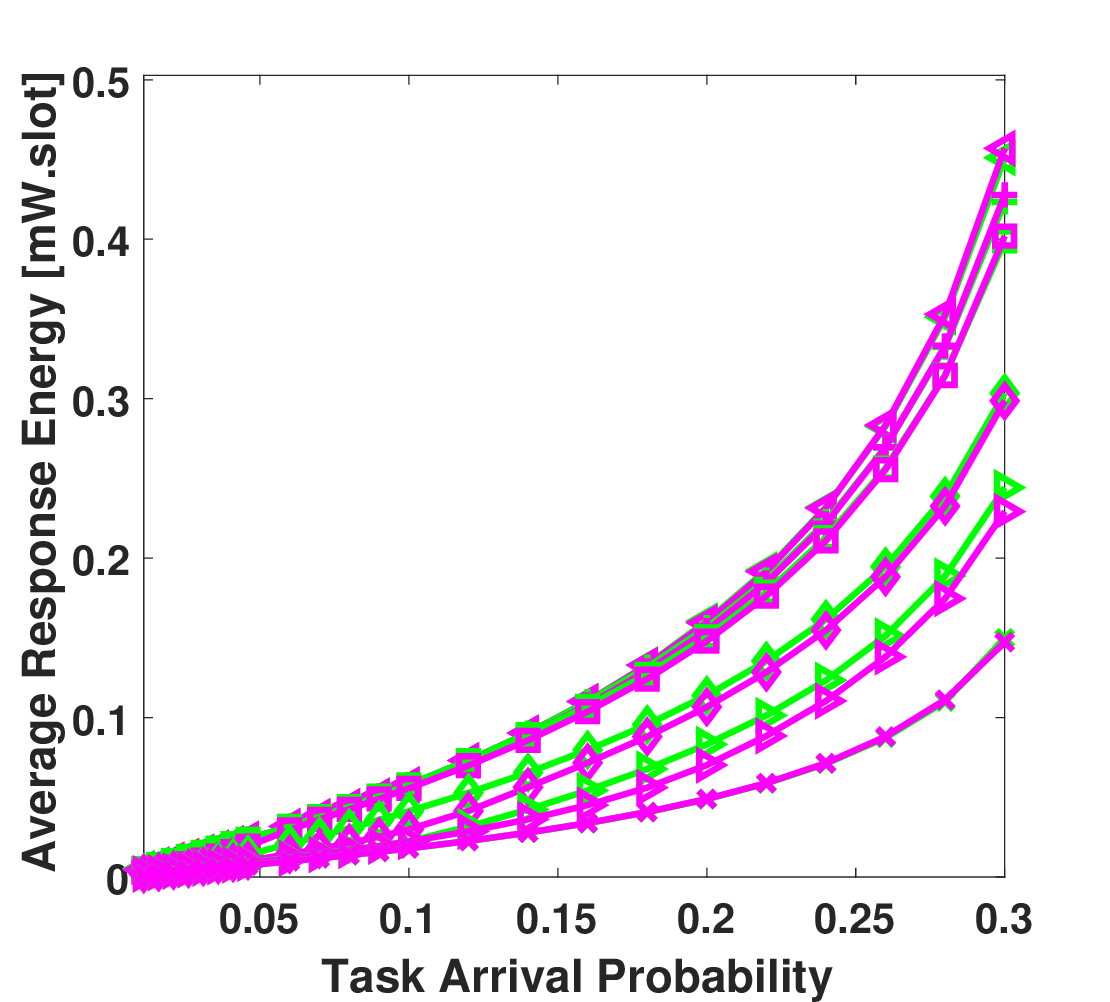}
     \caption{}
    \label{fig:ARE_vs_a_qh_99}
\end{subfigure}\\
\begin{subfigure}[t]{0.33\textwidth}
    \centering
    \includegraphics[width=\textwidth]{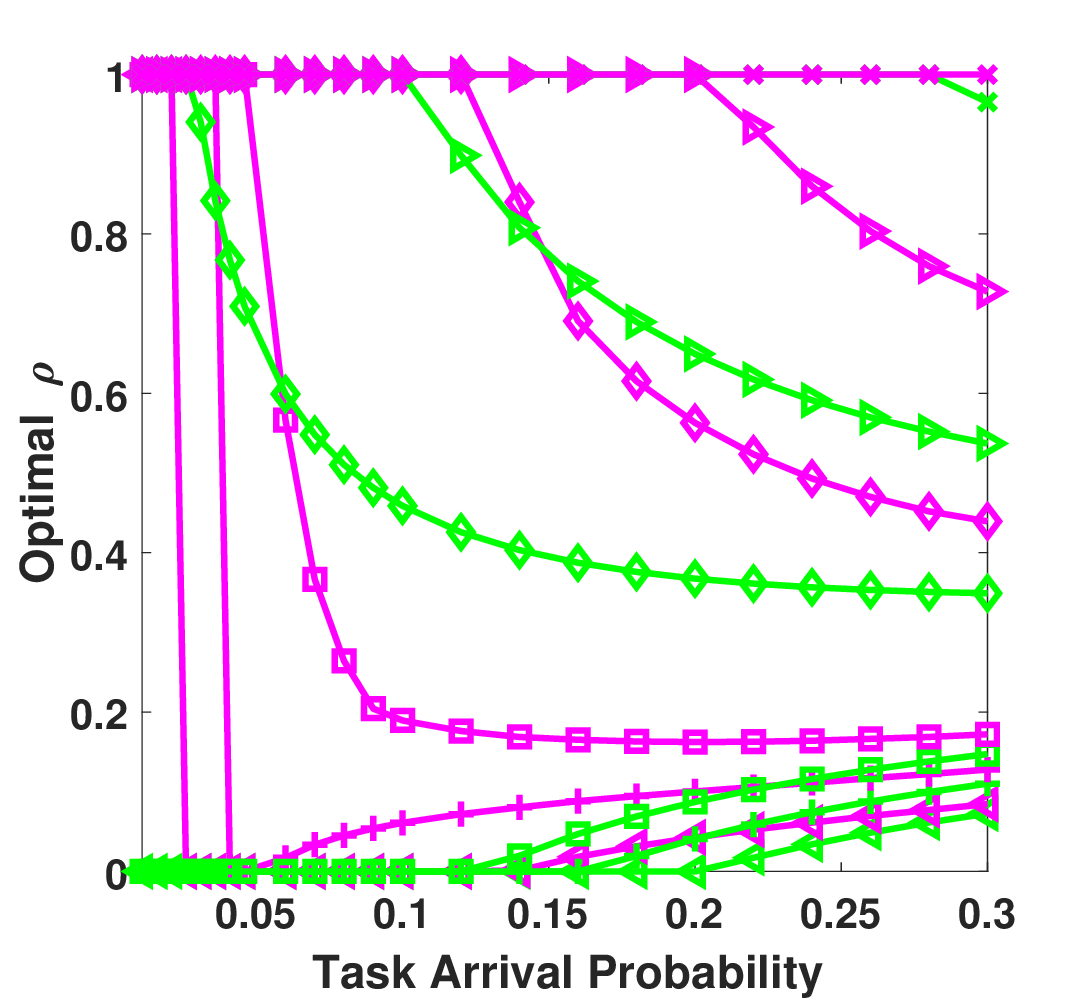}
     \caption{}
    \label{fig:rho_vs_a_qh_99}
\end{subfigure}%
\begin{subfigure}[t]{0.33\textwidth}
    \centering
    \includegraphics[width=\textwidth]{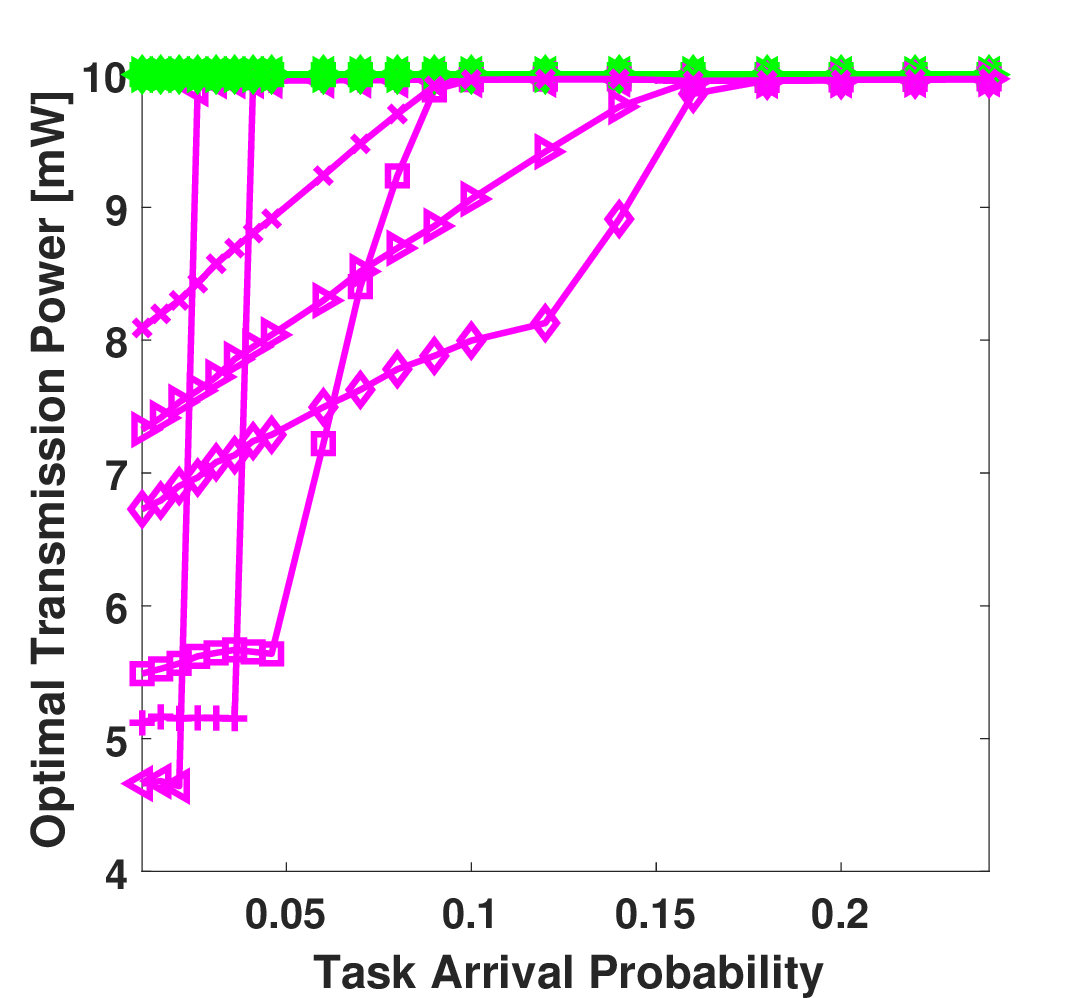}
     \caption{}
    \label{fig:P_vs_a_qh_99}
\end{subfigure}%
\begin{subfigure}[t]{0.33\textwidth}
    \centering
    \includegraphics[width=\textwidth]{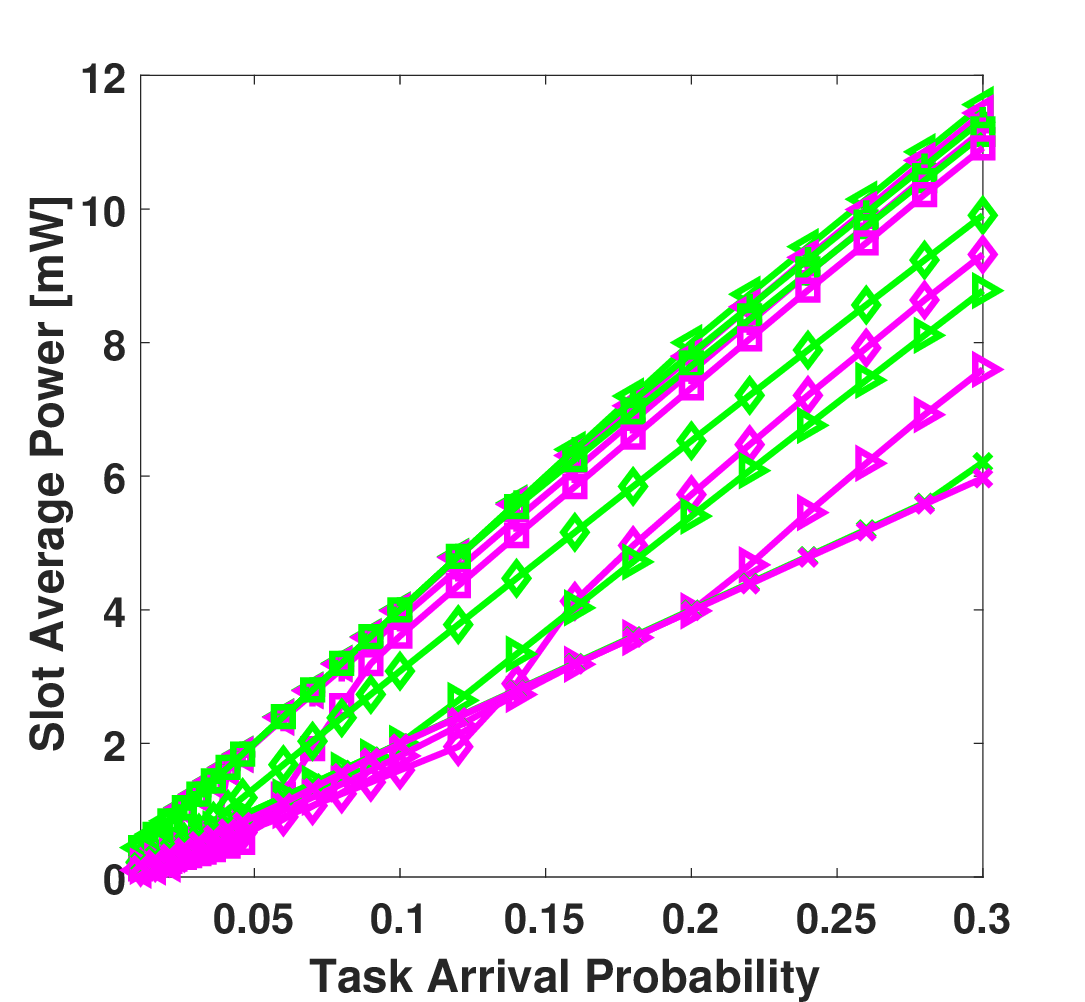}
     \caption{}
    \label{fig:SAP_vs_a_qh_99}
\end{subfigure}%
 }
\caption{Effect of task arrival probability $a$ on the (a) ART (b) ARE (c) $\rho^*$ (d) $P^*$ (e) SAP; $q^c_h=0.99$.} 
 \label{fig:all_vs_a_qh_99}
 \end{figure*}

Next, we fix $q^c_h$ at a relatively high value, i.e., $0.1$, and consider different values for $q^c_r$, from moderate to relatively high, and investigate the effect of increasing the task load. It is observed in Figs.~\ref{fig:ART_vs_a_qh_1} and~\ref{fig:ARE_vs_a_qh_1} that at higher values of $a$, as it increases, the ART and ARE of both the MART and MARE schemes increase continuously towards infinity after some point which is the sign of infeasibility of the MART and MARE problems and instability of the system queues at high task load. The more the $q^c_r$, the larger the range of $a$ at which the system queues are stable; this can also be noticed from Figs.~\ref{fig:rho_vs_a_qh_1},~\ref{fig:P_vs_a_qh_1}, and~\ref{fig:SAP_vs_a_qh_1}. Figs.~\ref{fig:ART_vs_a_qh_1} and~\ref{fig:rho_vs_a_qh_1} indicate that at low task loads, MART achieves the minimum ART by computing all the tasks at the HN ($\rho^*=0$) or the RN ($\rho^*=1$) respectively in the case of moderate or high $q^c_r$. However, as the load increases, $\rho^*$ changes and the tasks are distributed between the RN and the HN, and the more the $q^c_r$, the more the $\rho^*$. It is also observed in Figs.~\ref{fig:P_vs_a_qh_1} and~\ref{fig:SAP_vs_a_qh_1} that even though MART uses maximum power at the nodes, the SAP is different for different values of $q^c_r$. This is due to the fact that the lower the $q^c_r$, the higher the portion of tasks offloaded to the HN and therefore, the higher the average power consumption in each slot. Also note that when $\rho^*=1$, task/result transmissions happen only between SN and RN which is why the SAP has a lower slope in the corresponding cases. 
Fig.~\ref{fig:ART_vs_a_qh_1} shows that the interesting things happen with the MARE scheme at low values of $a$; as $a$ increases, the ART increases up to some point and then returns and decreases. The lower the $q^c_r$, the lower the $a$ and the higher the ART at the return point. These return points correspond to the points in Fig.~\ref{fig:rho_vs_a_qh_1} where $\rho^*$ changes from $1$ to a lower value and the points in Fig.~\ref{fig:P_vs_a_qh_1} where $P^*$ has rapid increase. Their affect is also clear on the SAP in Fig.~\ref{fig:SAP_vs_a_qh_1}. This trend is because the RN has a moderate computation capability and MARE tries to assign all the tasks to the RN in order to not consume power for RN-HN/HN-RN transmissions of the tasks/results. In fact, at lower loads, MARE does not find the RoD in the ART worth the RoI in the SAP. But, that does not hold as the load increases and therefore, it becomes necessary to offload some tasks to the HN. Now, at this situation, the lower the $q^c_r$, the lower the $\rho^*$, and MARE finds the RoI in the SAP worth the RoD in the ART and hence, the transmit power at the nodes are increased. However, as $a$ increases more and the ART increases rapidly, MARE decreases the transmit power, similar to the trends in Fig.~\ref{fig:P_vs_a_qr_001} discussed previously.

In the previous figures, we observed that as the task load increased, $\rho^*$ was either not changed or it changed smoothly between $0$ and $1$. Similarly, the changes in $P^*$ was smooth. It is interesting to note that this might not be the case in some system settings. To show that, we consider scenarios with very high computation capability at the HN, i.e., $q^c_h=0.99$ and moderate to high capabilities for the RN, i.e., $q^c_r$ from $0.05$ to $0.5$. Since the maximum possible channel service probabilities (i.e., at the maximum power) are approximately $0.368$, the ART and the ARE will increase rapidly if the value of $a$ gets close to that. Therefore, we consider $a$ up to $0.3$ to be able to show the details at the lower values of $a$ clearly. Fig.~\ref{fig:all_vs_a_qh_99} shows that the results of MART have trends similar to the ones in the previous figures, but the results of MARE are even more interesting than those shown before. Here, Fig.~\ref{fig:ART_vs_a_qh_99} indicates that in the cases of $q^c_r$ equal to $0.05$ and $0.07$ at low $a$, the ART of MARE has sharp decrease after increase. This can be explained by the sharp change of $\rho^*$ from $1$ to $0$ in Fig.~\ref{fig:rho_vs_a_qh_99} and the sharp change of $P^*$ from around $5$ mW to almost $10$ mW in Fig.~\ref{fig:P_vs_a_qh_99}, as follows. At low $a$, MARE does not find the RoD in the ART worth the RoI in the SAP and achieves lower ARE by assigning all the tasks to the RN. However, when $a$ increases over a specific value, MARE needs to reduce the load of the RN computing server but it finds that in the new situation, offloading all the tasks to the HN leads to a lower ARE compared with the case of distributing the tasks between the RN and HN. Since the delay at the computation buffer of the HN is very low, the delays at the HN-RN and RN-HN transmission buffers are the deciding factor; hence, a transmit power close to the maximum is used at the nodes to decrease the ART and keep the ARE low. Then, as $a$ increases more, MARE needs to reduce the load of the RN-HN and HN-RN transmission buffers and hence, some tasks are assigned to the RN again. Note that in the case of $q^c_r=0.5$ with MARE, $\rho^*$ is $1$ for all the values of $a$. In our simulations, we have observed that this remains so and MARE does not distribute the tasks between the RN and HN, even if we increase $a$ over $0.3$ and up to the points that the ART gets very high and goes towards infinity. This is due to the fact that the maximum service probability of the transmission buffers is $0.368$, and the SN-RN and RN-SN links are the bottlenecks and will cause high ART if $a$ goes over $0.3$. Moreover, even though the HN computation probability is $0.99$, power consumption for offloading some tasks to the HN at high $a$ is not worth the decrease in the ART due to the large queuing in the RN-HN and HN-RN transmission queues. Since $q^c_r=0.5$ is large enough to result in small queuing at the RN computation buffer, assigning all the tasks to the RN leads to lower ARE overall.

\section{Conclusion}\label{sec:conclusion}
This paper has investigated MMEC in the time-slotted systems with block fading channels and fixed transmit power at the nodes. We have studied stochastic offloading scheme in a system with an SN, an RN, and an HN, where SN sends its tasks to the RN and the RN decides in a random way to compute the arrived tasks by its own server or to forward them to be processed at the HN. We have provided a framework which considers the impact of all the computation and transmission queues in the system and, exploiting the queuing theory, we have characterized the ART of the system. We have also proposed the concept of the ARE, as the product of the SAP and the ART, which provides a novel viewpoint about energy efficiency. In order to minimize the ART or the ARE, we have proposed the MART and MARE schemes with the corresponding problem formulations and have analyzed their feasible sets and the objective functions. Based on that, we have presented solution methods and have evaluated their performances. Numerical results confirm the validity of the presented analysis and demonstrate that the proposed schemes significantly improve the ART and ARE compared with the related baseline methods. Moreover, through the extensive evaluations, we have shown the possible trends in the outcomes of the proposed schemes in different scenarios.

\section*{Acknowledgments}
This work was supported by the University of Tabriz [grant number 27/1533].

\bibliography{refs}

\begin{thebibliography}{10}
\expandafter\ifx\csname url\endcsname\relax
  \def\url#1{\texttt{#1}}\fi
\expandafter\ifx\csname urlprefix\endcsname\relax\def\urlprefix{URL }\fi
\expandafter\ifx\csname href\endcsname\relax
  \def\href#1#2{#2} \def\path#1{#1}\fi

\bibitem{JR:class_opt_prob}
J.~Bellendorf, Z.~Ádám Mann, Classification of optimization problems in fog
  computing, Fut. Gen. Comput. Sys. 107 (2020) 158--176.

\bibitem{JR:class_fog_app}
J.~C. Guevara, R.~da~S.~Torres, N.~L. da~Fonseca, On the classification of fog
  computing applications: A machine learning perspective, J. Netw. Comput.
  Appl. 159 (2020) 102596.

\bibitem{JR:het_edge_op_plat}
H.~Ning, Y.~Li, F.~Shi, L.~T. Yang, Heterogeneous edge computing open platforms
  and tools for internet of things, Fut. Gen. Comput. Sys. 106 (2020) 67--76.

\bibitem{JR:review_edge_ref}
I.~Sittón-Candanedo, R.~S. Alonso, J.~M. Corchado, S.~Rodríguez-González,
  R.~Casado-Vara, A review of edge computing reference architectures and a new
  global edge proposal, Fut. Gen. Comput. Sys. 99 (2019) 278--294.

\bibitem{JR:fog_stat_art}
J.~Moura, D.~Hutchison, Fog computing systems: State of the art, research
  issues and future trends, with a focus on resilience, J. Netw. Comput. Appl.
  169 (2020) 102784.

\bibitem{JR:edge_feat_virtual}
Y.~Mansouri, M.~A. Babar, A review of edge computing : Features and resource
  virtualization, J. Parallel Distrib. Comput. 150 (2021) 155--183.

\bibitem{JR:surv_offl_model_mec}
H.~Lin, S.~Zeadally, Z.~Chen, H.~Labiod, L.~Wang, A survey on computation
  offloading modeling for edge computing, J. Netw. Comput. Appl. 169 (2020)
  102781.

\bibitem{JR:hier_auction}
A.~{Kiani}, N.~{Ansari}, Toward hierarchical mobile edge computing: An
  auction-based profit maximization approach, IEEE Internet Things J. 4~(6)
  (2017) 2082--2091.

\bibitem{JR:hier_cloudlet}
Q.~{Fan}, N.~{Ansari}, Workload allocation in hierarchical cloudlet networks,
  IEEE Commun. Lett. 22~(4) (2018) 820--823.

\bibitem{JR:joint_lb_offl_veh}
Y.~{Dai}, D.~{Xu}, S.~{Maharjan}, Y.~{Zhang}, Joint load balancing and
  offloading in vehicular edge computing and networks, IEEE Internet Things J.
  6~(3) (2019) 4377--4387.

\bibitem{JR:dynamic_congestion}
K.~Guo, M.~Yang, Y.~Zhang, X.~Jia, Efficient resource assignment in mobile edge
  computing: A dynamic congestion-aware offloading approach, J. Netw. Comput.
  Appl. 134 (2019) 40--51.

\bibitem{JR:stoch_resal_pw_mec}
C.~Li, W.~Chen, H.~Tang, Y.~Xin, Y.~Luo, Stochastic computation resource
  allocation for mobile edge computing powered by wireless energy transfer, Ad
  Hoc Netw. 93 (2019) 101897.

\bibitem{JR:reg_intel}
G.~{Wang}, F.~{Xu}, Regional intelligent resource allocation in mobile edge
  computing based vehicular network, IEEE Access 8 (2020) 7173--7182.

\bibitem{JR:het_elastic}
C.~Li, J.~Bai, Y.~Ge, Y.~Luo, Heterogeneity-aware elastic provisioning in
  cloud-assisted edge computing systems, Fut. Gen. Comput. Sys. 112 (2020)
  1106--1121.

\bibitem{JR:opt_fair_aw}
C.~{Guo}, W.~{He}, G.~Y. {Li}, Optimal fairness-aware resource supply and
  demand management for mobile edge computing, IEEE Wireless Commun. Lett.
  10~(3) (2021) 678--682.

\bibitem{JR:del_opt_IRS}
F.~{Zhou}, C.~{You}, R.~{Zhang}, Delay-optimal scheduling for {IRS}-aided
  mobile edge computing, IEEE Wireless Commun. Lett. 10~(4) (2021) 740--744.

\bibitem{JR:IRS_d2d_coop}
S.~Mao, X.~Chu, Q.~Wu, L.~Liu, J.~Feng, Intelligent reflecting surface enhanced
  {D2D} cooperative computing, IEEE Wireless Commun. Lett. 10~(7) (2021)
  1419--1423.

\bibitem{JR:svc_plc_req_rout}
B.~Yuan, S.~Guo, Q.~Wang, Joint service placement and request routing in mobile
  edge computing, Ad Hoc Netw. 120 (2021) 102543.

\bibitem{JR:GEESE}
M.~Liyanage, F.~Dar, R.~Sharma, H.~Flores, {GEESE}: Edge computing enabled by
  {UAVs}, Perv. Mobil. Comput. 72 (2021) 101340.

\bibitem{JR:trajec_uav}
X.~Qin, Z.~Song, Y.~Hao, X.~Sun, Joint resource allocation and trajectory
  optimization for multi-{UAV}-assisted multi-access mobile edge computing,
  IEEE Wireless Commun. Lett. 10~(7) (2021) 1400--1404.

\bibitem{JR:task_mig_reinforc19}
C.~Zhang, Z.~Zheng, Task migration for mobile edge computing using deep
  reinforcement learning, Fut. Gen. Comput. Sys. 96 (2019) 111--118.

\bibitem{JR:intel_ml_smart_city21}
Z.~Lv, D.~Chen, R.~Lou, Q.~Wang, Intelligent edge computing based on machine
  learning for smart city, Fut. Gen. Comput. Sys. 115 (2021) 90--99.

\bibitem{JR:dron_learn_video21}
C.~Qu, P.~Calyam, J.~Yu, A.~Vandanapu, O.~Opeoluwa, K.~Gao, S.~Wang,
  R.~Chastain, K.~Palaniappan, Dronecoconet: Learning-based edge computation
  offloading and control networking for drone video analytics, Fut. Gen.
  Comput. Sys. 125 (2021) 247--262.

\bibitem{JR:autonomous_deep_learn21}
A.~Shakarami, A.~Shahidinejad, M.~Ghobaei-Arani, An autonomous computation
  offloading strategy in mobile edge computing: A deep learning-based hybrid
  approach, J. Netw. Comput. Appl. 178 (2021) 102974.

\bibitem{JR:ee_mec_offl17}
C.~{You}, K.~{Huang}, H.~{Chae}, B.~{Kim}, Energy-efficient resource allocation
  for mobile-edge computation offloading, IEEE Trans. Wireless Commun. 16~(3)
  (2017) 1397--1411.

\bibitem{JR:MEC_assoc_multi_task18}
Y.~{Dai}, D.~{Xu}, S.~{Maharjan}, Y.~{Zhang}, Joint computation offloading and
  user association in multi-task mobile edge computing, IEEE Trans. Veh.
  Technol. 67~(12) (2018) 12313--12325.

\bibitem{JR:eng_aw_green_pow19}
L.~Gu, J.~Cai, D.~Zeng, Y.~Zhang, H.~Jin, W.~Dai, Energy efficient task
  allocation and energy scheduling in green energy powered edge computing, Fut.
  Gen. Comput. Sys. 95 (2019) 89--99.

\bibitem{JR:joint_offl_ee19}
H.~{Sun}, F.~{Zhou}, R.~Q. {Hu}, Joint offloading and computation energy
  efficiency maximization in a mobile edge computing system, IEEE Trans. Veh.
  Technol. 68~(3) (2019) 3052--3056.

\bibitem{JR:scalable_e_d19}
M.~T. {Kabir}, C.~{Masouros}, A scalable energy vs. latency trade-off in
  full-duplex mobile edge computing systems, IEEE Trans. Commun. 67~(8) (2019)
  5848--5861.

\bibitem{JR:en_eff_user_asgn_pwalloc20}
H.~Feng, S.~Guo, A.~Zhu, Q.~Wang, D.~Liu, Energy-efficient user selection and
  resource allocation in mobile edge computing, Ad Hoc Netw. 107 (2020) 102202.

\bibitem{JR:opt_en_wirl_pow}
F.~Wang, J.~Xu, S.~Cui, Optimal energy allocation and task offloading policy
  for wireless powered mobile edge computing systems, IEEE Trans. Wireless
  Commun. 19~(4) (2020) 2443--2459.

\bibitem{JR:eng_aw_fog_cluster21}
Y.~Hao, J.~Cao, Q.~Wang, J.~Du, Energy-aware scheduling in edge computing with
  a clustering method, Fut. Gen. Comput. Sys. 117 (2021) 259--272.

\bibitem{JR:bufaid_adaptive}
N.~Zlatanov, R.~Schober, P.~Popovski, Buffer-aided relaying with adaptive link
  selection, IEEE J. Sel. Areas Commun. 31~(8) (2013) 1530--1542.

\bibitem{JR:buf_fixedmixed}
N.~Zlatanov, R.~Schober, Buffer-aided relaying with adaptive link
  selection-fixed and mixed rate transmission, IEEE Trans. Info. Theory 59
  (2013) 2816--2840.

\bibitem{JR:BufImp_without_link}
J.~Hajipour, R.~Ruby, A.~Mohamed, V.~C.~M. Leung, Buffer-aided relaying
  improves both throughput and end-to-end delay, EURASIP J. Wireless Commun.
  Netw. (2015) 1--17.

\bibitem{JR:EERA_barel}
J.~{Hajipour}, J.~M. {Niya}, D.~W.~K. {Ng}, Energy-efficient resource
  allocation in buffer-aided wireless relay networks, IEEE Trans. Wireless
  Commun. 16~(10) (2017) 6648--6659.

\bibitem{JR:CRS}
S.~L. Lin, K.~H. Liu, Relay selection for cooperative relaying networks with
  small buffers, IEEE Trans. Veh. Technol. 65~(8) (2016) 6562--6572.

\bibitem{JR:sbaramec}
J.~{Hajipour}, Stochastic buffer-aided relay-assisted {MEC}, IEEE Commun. Lett.
  24~(4) (2020) 931--934.

\bibitem{book:stochasticopt}
M.~Neely, Stochastic Network Optimization with Application to Communication and
  Queueing Systems, Morgan \& Claypool, 2010.

\bibitem{book:intro_que_telecom}
L.~Lakatos, L.~Szeidl, M.~Telek, Introduction to Queueing Systems with
  Telecommunication Applications, Springer, 2013.

\bibitem{book:nonlinear_bertsekas}
D.~Bertsekas, Nonlinear Programming., Athena Scientific, 1995.

\end{thebibliography}

\end{document}